\theoremstyle{plain}
\newtheorem{theorem}{Theorem}
\newtheorem{lemma}{Lemma}
\newtheorem{claim}{Claim}
\newtheorem{proposition}{Proposition}
\newtheorem{corollary}{Corollary}
\newtheorem*{problem-nonum}{Problem}
\newtheorem{definition}{Definition}
\title{\Large Learning and Communication Towards Unanimous Consent\thanks{We thank Nageeb Ali, Arjada Bardhi, Marco Battaglini, Dirk Bergemann, Xiaoyu Cheng, Joyee Deb, Theo Durandard, Mira Frick, Tan Gan, Johannes H\"orner, Ryota Iijima, Fei Li, Xiao Lin, Bart Lipman, Yichuan Lou, Lucas Maestri, Harry Pei, Allen Vong, Thomas Wiseman, and audiences at Yale, CUHK-Shenzhen, INFORMS 2022, ICGT 2023, GAMES 2024, SEA 2024, SAET 2025, and Midwest Theory Conference 2025 for helpful comments and suggestions. Part of the work was done when Yingkai Li was a postdoc at Yale University under the support of Sloan Research Fellowship FG-2019-12378. Yingkai Li also thanks NUS Start-up grant for financial support.}}
\author{Yingkai Li\thanks{Department of Economics, National University of Singapore.
Email: \texttt{yk.li@nus.edu.sg}} 
\and Boli Xu\thanks{Department of Economics, University of Iowa.
Email: \texttt{boli-xu@uiowa.edu}}} 
\begin{document}
\newcommand{\state}{\theta}
\newcommand{\states}{\Theta}

\newcommand{\type}{\lambda}
\newcommand{\types}{\Lambda}

\newcommand{\test}{\sigma}
\newcommand{\tests}{\mathcal{T}}
\newcommand{\menu}{T}
\newcommand{\signals}{S}
\newcommand{\signal}{s}

\newcommand{\talkmessages}{M}
\newcommand{\talkmessage}{m}

\newcommand{\val}{v}
\newcommand{\util}{u}
\newcommand{\intP}{U}
\newcommand{\intutil}{V}

\newcommand{\cutoffstate}{\state}
\newcommand{\cutoffposterior}{\posterior^*}

\newcommand{\action}{a}
\newcommand{\pay}{b}

\newcommand{\alloc}{x}
\newcommand{\virtual}{\Phi}

\newcommand{\posteriormean}{\mu}

\newcommand{\typedist}{F}
\newcommand{\typedensity}{f}

\newcommand{\statedist}{G}
\newcommand{\statedensity}{g}

\newcommand{\principalPosterior}{\psi_p}
\newcommand{\posterior}{\psi}
\newcommand{\principalPosteriorl}{\psi_{p,l}}
\newcommand{\principalPosteriorh}{\psi_{p,h}}
\newcommand{\principalPosteriorQ}{\psi_q}
\newcommand{\agentPosterior}{\psi_a}

\newcommand{\setsize}[1]{{\left|#1\right|}}

\newcommand{\floor}[1]{
{\lfloor {#1} \rfloor}
}
\newcommand{\bigfloor}[1]{
{\left\lfloor {#1} \right\rfloor}
}

%
%
\newcommand{\given}{\,|\,}

\newcommand{\prob}[2][]{\text{\bf Pr}\ifthenelse{\not\equal{}{#1}}{_{#1}}{}\!\left[{\def\givenn{\middle|}#2}\right]}
\newcommand{\expect}[2][]{\text{\bf E}\ifthenelse{\not\equal{}{#1}}{_{#1}}{}\!\left[{\def\givenn{\middle|}#2}\right]}

\newcommand{\tparen}{\big}
\newcommand{\tprob}[2][]{\text{\bf Pr}\ifthenelse{\not\equal{}{#1}}{_{#1}}{}\tparen[{\def\given{\tparen|}#2}\tparen]}
\newcommand{\texpect}[2][]{\text{\bf E}\ifthenelse{\not\equal{}{#1}}{_{#1}}{}\tparen[{\def\given{\tparen|}#2}\tparen]}

\newcommand{\sprob}[2][]{\text{\bf Pr}\ifthenelse{\not\equal{}{#1}}{_{#1}}{}[#2]}
\newcommand{\sexpect}[2][]{\text{\bf E}\ifthenelse{\not\equal{}{#1}}{_{#1}}{}[#2]}

\newcommand{\rbr}[1]{\left(\,#1\,\right)}
\newcommand{\sbr}[1]{\left[\,#1\,\right]}
\newcommand{\cbr}[1]{\left\{\,#1\,\right\}}

\newcommand{\suchthat}{\,:\,}

\newcommand{\partialx}[2][]{{\tfrac{\partial #1}{\partial #2}}}
\newcommand{\nicepartialx}[2][]{{\nicefrac{\partial #1}{\partial #2}}}
\newcommand{\dd}{\,{\mathrm d}}
\newcommand{\ddx}[2][]{{\tfrac{\dd #1}{\dd #2}}}
\newcommand{\niceddx}[2][]{{\nicefrac{\dd #1}{\dd #2}}}
\newcommand{\grad}{\nabla}

\newcommand{\symdiff}{\triangle}

\newcommand{\abs}[1]{\left| #1 \right |}

\newcommand{\reals}{{\mathbb R}}
\newcommand{\posreals}{\reals_+}
\newcommand{\indicate}[1]{\mathbf{1}\left[\,#1\,\right]}

\newcommand{\primed}{^{\dagger}}

\date{}
\maketitle

\begin{abstract}

\noindent A principal and an agent can launch a project under unanimous consent. Their individual payoffs from the project depend on an underlying state, and the agent privately knows his own preference. The principal can conduct a test to learn about the state and then communicate with the agent, but has limited commitment, as she may misreport her findings. We show that limited commitment makes binary tests optimal. Moreover, when players' preferences are positively aligned, the optimal test is a \textit{threshold test}. When their preferences are negatively aligned, the optimal test is either an \textit{interval test} or a \textit{tail test}, depending on the agent's relative risk attitude. Additionally, the principal can benefit from screening the agent through a menu of tests, which admits a simple structure regardless of the complexity of the agent's type space.\\


\vspace{6pt}
\noindent \textbf{Keywords:} 
information acquisition, cheap talk, collective decision, unanimity, non-linear persuasion. 

\vspace{3pt}

\noindent \textbf{JEL Codes:} D70, D82, D83. 
\end{abstract}

\thispagestyle{empty}
\newpage
\setcounter{page}{1}

\begin{spacing}{1.5}

\section{Introduction}
\label{sec:intro}

In many situations of collective decision-making, a certain action can be taken if all parties consent to it, but it is unclear ex-ante how each party will gain or lose from the action. To facilitate collaboration, one party can learn about the action and communicate her findings to the other parties. However, given the potential divergence of interests, the information acquirer often lacks the credibility to truthfully reveal what she has learned, making it challenging to reach unanimous consent. 

This issue arises in a variety of contexts. For instance, a company's CEO may commission a consulting report to persuade the board to launch a business plan with uncertain profitability. A political party may use a think-tank study to convince the opposing party to support a policy whose effects are ex-ante unknown. A manager may use a feasibility study to convince a worker to undertake a novel task of unknown difficulty. A researcher may present a proposal to persuade potential collaborators to embark on a project with limited knowledge about its chances of success. Unfortunately, in all of these examples, the information acquirer has both the opportunity and the incentive to distort or manipulate the information generated and presented.\footnote{For instance, it is widely recognized that some consulting firms may tailor their findings to clients' predetermined objectives.}

Given this credibility issue, how can a party convincingly acquire and communicate information to persuade other parties? In this paper, we study this question by developing a model in which a principal and an agent can launch a project under unanimous consent. Their individual payoffs from the project are determined by a state drawn from a continuous distribution. In addition, the agent privately knows how his payoff depends on the state, reflecting the realistic possibility that the principal is imperfectly informed about the agent's preference.\footnote{For example, a CEO may not know which level of profitability the board deems sufficient to warrant investment, a political party may not know how the opposing party evaluates a policy, a manager may not know a worker's capability to handle a difficult task, and a researcher may not know a potential collaborator's current bandwidth to embark on a new project.} The game begins with the principal publicly adopting a test to learn about the payoff-relevant state. The result of the test is privately revealed to the principal, who then sends a cheap-talk message to the agent conveying her findings, which, crucially, need not be truthful. After the communication, the principal decides whether to propose or forgo the project. If she proposes the project, the agent then decides whether to accept or reject it.

\Cref{sec:general} presents the general results. The principal in our model has \textit{limited commitment} in the sense that she has ex-ante commitment to an information structure but \textit{no} ex-post commitment to truthful reporting of the acquired information. This feature distinguishes our setting from standard information design models, where the principal has full commitment. Allowing full commitment would render our model intractable, particularly due to the combination of a privately informed agent, a continuous state space, and non-linear payoff functions of the players.\footnote{Existing information design models featuring a privately informed agent and a continuous state space focus on linear persuasion problems, in which the players' payoffs are linear in the state (see, e.g., \citet{kolotilin2022censorship} and \citet{candogan2023optimal}). By contrast, our model allows for arbitrary payoff functions, enabling a broader range of applications.} Somewhat surprisingly, the principal's limited commitment restores the tractability of the model and yields simplicity in the results. \Cref{prop:binary} shows that it is without loss of optimality for the principal to adopt a binary test that returns either a \emph{proposal signal}, leading her to propose the project, or a \emph{null signal}, inducing her to forgo it. This result follows from the requirement to make the test \textit{trustworthy}, in the sense that the principal must be willing to truthfully report the realized signal. Intuitively, if multiple signals were to induce a proposal, the principal could report the one that is most probable to induce the agent's acceptance, thereby making the test untrustworthy; meanwhile, combining signals that induce the principal to forgo the project does not affect the equilibrium outcome. It is worth noting that \Cref{prop:binary} is driven by the principal's limited commitment rather than by the recommendation principle \citep[e.g.,][]{bergemann2019information}, which would only allow us to focus on tests with up to $2^{|\types|}$ signals where $|\types|$ is the cardinality of the agent's type space.  
 
Building on the cornerstone result of \Cref{prop:binary}, \Cref{prop:exist} characterizes the optimal test, which takes the simple form of a deterministic binary test. This characterization further delivers two insights. First, even though learning is costless for the principal, fully learning the state may be strictly suboptimal (\Cref{coro:nofulllearn}). Intuitively, committing to learning less could be beneficial because it relaxes her truth-telling constraint in the communication stage. Second, the principal may benefit from reduced payoffs in some states, as it weakens her incentive to misreport upon seeing the null signal (\Cref{coro:overturn}). 


\Cref{sec:prefalign} studies two special cases commonly observed in reality. The case of \textit{positively aligned preferences} arises when players share the same ordinal ranking of the states but differ in their cardinal valuations. For instance, both a CEO and the board prefer a more profitable business plan, and yet they may disagree on the level of profitability that justifies investment. Similarly, both a leading researcher and her potential collaborator favor projects with stronger publication prospects, but the leading researcher may be willing to proceed as long as the project has a chance in a decent journal, while the collaborator may find it worthwhile only if it can hit a top-tier outlet. The case of \textit{negatively aligned preferences}, by contrast, occurs when players' ordinal rankings of the states are opposite: a state that is more desirable for one player is necessarily less desirable for the other. In the political parties' example, if the unknown state represents the prospective policy's electoral side effect, then an effect that benefits one party necessarily harms the other. In the manager–worker setting, the worker prefers an easy task, whereas the manager may favor a difficult one. 

To analyze these two cases, we order the states according to the principal's preference. \Cref{thm:positivealign} shows that under positively aligned preferences, the optimal test is a \textit{threshold test}, under which the project is proposed if and only if the state exceeds a certain threshold.\footnote{The optimality of threshold tests is also discovered by some existing papers on linear persuasion with a privately informed agent and a continuous space (e.g., \cite{kolotilin2022censorship}). However, their results cannot be directly applied here, as the players' payoffs are non-linear in our setting.} Importantly, this result is true even though the principal is uncertain about the level of alignment. It offers plausible explanations for the prevalence of simple pass-or-fail tests in settings such as CEO-board interactions and research collaborations. \Cref{thm:negativealign} shows that under negatively aligned preferences, the optimal test is an \textit{interval test}, which induces a proposal when the state is intermediate, if the agent is relatively more risk-averse than the principal, or a \textit{tail test}, which induces a proposal when the state is extreme, if the agent is relatively more risk-seeking. 

Since the agent privately knows his type, it is natural to ask whether the principal can benefit from screening by offering a menu of tests for the agent to choose from. \Cref{sec:private_menu} provides a positive answer based on a canonical specification of payoff functions. \Cref{thm:private} characterizes the optimal test menu, which includes a threshold test plus up to two interval tests. Notably, the optimal test menu in our setting is much simpler than that in \cite{candogan2023optimal}, a paper that, like ours, highlights the value of screening in information design problems with a privately informed receiver.\footnote{The optimal menu in our paper consists of up to three tests regardless of the cardinality of the agent's type space, and each test takes a simple form. By contrast, the number of tests in the optimal menu in \cite{candogan2023optimal} grows linearly in the number of agent types, and the tests exhibit a laminar structure.} The simplicity of our result arises from the principal's limited commitment, which, again, underscores the value of \Cref{prop:binary}. 

\Cref{sec:extension} presents two extensions. First, we generalize the agent's action space to any arbitrary $\mathcal{A}$, capturing real-world situations where the agent’s decision upon a proposal is richer than simple acceptance or rejection. While a recommendation policy may require up to $|\mathcal{A}|^{|\types|}$ realizable signals, we show that the principal's limited commitment still enables us to focus on binary tests under some mild assumptions on the players' payoff functions. Second, we study a multiple-agent extension in which the principal needs to reach unanimous consent with all the agents. We transform this problem into a single-agent problem in which the principal needs to persuade only the pivotal agent---the one who is least likely to accept the proposal. We find that all the results established for the baseline model continue to hold here. 

\subsection{Related Literature}
\label{sub:literature}

First, this paper contributes to the study of how information design facilitates collective decision-making that requires unanimous consent.\footnote{There are also papers studying the impact of exogenously dispersed information in collective decision-making under the majority rule rather than the unanimity rule. See, e.g., \cite{ali2025political}.} This question has been studied in contexts such as voting \citep{bardhi2018modes} and two-sided matching \citep{xu2023}. In these two papers, the information designer merely controls the information environment. By contrast, in our paper, the information designer (i.e., the principal) is also a party involved in the collective decision.\footnote{Another distinction is that both \cite{bardhi2018modes} and \cite{xu2023} feature multiple agents, while our baseline model has only one agent. To better demonstrate the difference, \Cref{sub:multi-agent} extends our model to multiple agents.} This enables our paper to speak to the applications we are interested in.

Second, this paper relates to the literature on information design with limited commitment in the sense that the designer has no ex-post commitment to truthful reporting of the realized signal.\footnote{In some other papers, it is the \textit{non-designer} who can manipulate the signal. For example, \cite{perez2022test, perez2023fraud} study the test design problem where test-takers can manipulate the input of the test. Other examples include \cite{ball2025scoring}, \cite{li2024screening}, etc.} For instance, in \cite{guo2021costly} and \cite{nguyen2021bayesian}, the designer can misreport at a cost that depends on the distance between the reported signal and the actual one. 
In \cite{lin2024credible}, the designer can misreport as long as the distribution of the reported signals remains identical to that of the actual ones. Unlike those papers, the designer (i.e., the principal) in our model can misreport \textit{at no cost} and \textit{subject to no restrictions}. 
This feature is also seen in \cite{lipnowski2022persuasion}\footnote{In \cite{lipnowski2022persuasion}, the designer can misreport with some exogenous probability. Our setting is aligned with its special case with that probability being one.} and in some papers on overt information acquisition before a cheap-talk game (\citealp{lyu2022information}; \citealp{kreutzkamp2024persuasion}).\footnote{\cite{kreutzkamp2024persuasion} shows that models of overt information acquisition before a cheap-talk game can be re-framed as information design models with the sender's limited commitment. See also papers that study information acquisition before contracting where the principal cannot commit to how to use the acquired information (e.g., \citealp{clark2024optimal}).} 
The crucial distinction between these papers and ours is that the information designer (the principal) in our model is an \textit{active participant} in the post-communication decision-making process --- she has a state-dependent preference and retains veto power over the project, instead of merely providing information. Another important distinction is that our model allows the agent to be privately informed.

\section{Model}
\label{sec:model}


A principal (``she'') and an agent (``he'') can launch a project under unanimous consent. The players' payoffs from the project are governed by a state $\state \in \states:= [-1, 1]$. The state is drawn from a continuous distribution $\statedist(\cdot)$ with probability density function $\statedensity(\cdot)$.\footnote{We assume the distribution  $\statedist(\cdot)$ is atomless to simplify the exposition. All the insights of the paper remain intact for distributions with point masses under slight modifications of the presentation.}

If the project is not launched, both players' payoffs are normalized to zero. If the project is launched, the principal's payoff is $u(\state) \in \mathbbm{R}$, and the agent's is $v(\state, \type)  \in \mathbbm{R}$, where $\type \in \types \subseteq \mathbbm{R}$ is the agent's type, drawn from a distribution $\typedist(\cdot)$ independent of $\statedist(\cdot)$. The agent's type indicates his preference for the project; in particular, $v(\state, \type)$ strictly increases in $\type$, so a higher-type agent gains more from the project. We assume that $\type$ is the agent's private information. Notably, our setting encompasses the special case in which the principal also knows $\type$ --- this occurs if we take $\typedist(\cdot)$ to be a degenerate distribution whose support consists of only the actual value of~$\type$. For technical reasons, both $u(\state)$ and $v(\state, \type)$ are assumed to be bounded. To make the analysis non-trivial, we also assume that $\inf_\state [u(\state)] < 0 < \sup_\state [u(\state)]$ and $\inf_\state [v(\state, \type)] < 0 < \sup_\state [v(\state, \type)]$ for any $\type$.

Neither player observes $\state$ directly. However, the principal can conduct a \textit{test} to acquire some information about $\state$ at no cost. A test $t = (\signals, \test)$ consists of (1) a compact signal space $\signals$ and (2) a measurable mapping $\test: \states \rightarrow \Delta(\signals)$ from the state space to the set of distributions on the signal space. With slight abuse of notation, we let $\test_{\signal}(\state)$ denote the probability of signal $\signal$ given state $\state$ under the mapping $\test$. We let $\tests$ denote the collection of all possible tests. As is standard in the literature, we can also formulate a test as the posterior distribution it induces; that is, we can view a test as an element of $\Delta(\Delta(\states))$ subject to the Bayesian plausibility condition. In our analysis, we will use the two formulations interchangeably.



The game proceeds as follows:
\begin{itemize}
\item \textbf{(Learning stage)} The principal chooses a test $t \in \tests$. Then, the test result $\signal\in\signals$ is generated by $t$ and privately revealed to the principal.
\item \textbf{(Communication stage)} The principal sends a cheap-talk message $\talkmessage \in \talkmessages$ to the agent, where $\talkmessages$ is any measurable set that includes $2^\signals$ as a subset. 
\item \textbf{(P-decision stage)} The principal chooses whether to \textit{propose} or \textit{forgo} the project. If she forgoes it, the project is not launched, and the game ends. 
\item \textbf{(A-decision stage)} If the principal has proposed the project, the agent chooses whether to \textit{accept} the proposal (and launch the project) or \textit{reject} it. 
\end{itemize}

We make two comments about the timeline. First, following the literature on overt information acquisition before cheap-talk communication (e.g., \citealp{kreutzkamp2024persuasion}), we can, without loss of generality, consider a specific message space $\talkmessages = \signals$ in the communication stage. In other words, the communication stage can be reformulated as the principal's reporting a test result $\tilde \signal \in \signals$. Note that the reported test result $\tilde \signal$ may differ from the actual one $\signal$. This reflects our main departure from standard information design models: the principal still has ex-ante commitment to an information structure but no longer has ex-post commitment to truthful reporting of the realized information. Second, the principal chooses a single test instead of offering a menu of tests for the agent to choose from. Since the agent privately knows his type, offering a menu of tests may serve the purpose of screening. To address this possibility, we analyze in \Cref{sec:private_menu} an alternative timeline in which the principal offers a test menu at the beginning of the game for the agent to choose from. 

\subsection{Strategies and Solution Concept}
\label{sec:strt}

In our game, players make decisions in the four named stages identified in the timeline. In the learning stage, the principal's strategy is given by her choice of $t \in \tests$. 

The communication and P-decision stages essentially occur at the same time. In these two stages, the principal's strategy is given by a mapping $\rho_p: \tests\times \signals \rightarrow \Delta(\talkmessages \times \{0,1\})$, where $\rho_p(t, \signal)$ is the joint distribution of the message and whether a proposal is made when she adopts a test $t$ and receives a signal $\signal$. Note that in these two stages, the principal forms a belief about the state $\state$, denoted by $\principalPosterior \in \Psi_p = \Delta(\states)$. 

Finally, in the A-decision stage, the agent's strategy is given by $\rho_a: \tests\times \talkmessages \rightarrow [0,1]$, where $\rho_a(t, \talkmessage)$ is the probability that the agent accepts the proposal when the principal adopts the test $t$, sends the message $\talkmessage$, and makes a proposal. In this stage, the agent has a belief about $\state$, which we denote by $\agentPosterior \in \Psi_a = \Delta(\states)$. 

We now specify the solution concept. Given the test $t$ chosen by the principal, we define an \textit{equilibrium} of the continuation game as a tuple $(\rho_p, \rho_a, \principalPosterior, \agentPosterior)$ that satisfies the following conditions. Essentially, our equilibrium concept augments weak Perfect Bayesian equilibrium with an additional requirement about off-equilibrium-path pessimism, as reflected by the fourth condition.\footnote{In fact, the requirement on off-equilibrium-path belief only serves to simplify the analysis and is not essential for the paper. This is because we will select the principal-preferred equilibrium, as mentioned later. All results continue to hold if we use instead the concept of weak perfect Bayesian equilibrium (WPBE), where we keep only the first three conditions.}
\begin{enumerate}
    \item Each player's strategy is sequentially rational given their own belief and the other player's strategy. 
    \item The principal's belief $\principalPosterior$ is formed by Bayes' rule.\footnote{Note that the principal will not face any off-equilibrium-path scenario, as she finishes her actions before the agent's action starts.}
    \item If there is a positive probability that the project is proposed and the message $\talkmessage$ is sent, the agent's belief $\agentPosterior$ is formed by Bayes' rule. 
    \item If there is zero probability that the project is proposed and the message $\talkmessage$ is sent, the agent's belief $\agentPosterior$ is arbitrarily set as any state that induces him a negative payoff, leading him always to reject the proposal.\footnote{Such a state exists since we assume $\inf_\state [v(\state, \type)] < 0$ for any $\type$ in the model.}
\end{enumerate}

Finally, we select the principal-preferred equilibrium when there are multiple equilibria. This is a common practice to rule out the babbling equilibrium in cheap-talk models. 

\section{Limited Commitment Makes Simplicity}
\label{sec:general}


Following the literature on overt information acquisition before cheap-talk communication (e.g., \citealp{kreutzkamp2024persuasion}), we can, without loss of generality, restrict attention to \textit{truthful equilibria} where (a) the message space is $M=S$, and (b) the principal finds it incentive compatible to truthfully report the test result. On the equilibrium path of a truthful equilibrium, the principal and the agent form the same posterior belief about $\state$, which is induced by the realized result $\signal$ from the test $t$ and denoted by $\posterior_t(\signal) \in \Delta(\states)$. 

We introduce a tie-breaking assumption that the agent always accepts the proposal when being indifferent. Let $\rho_t(\tilde{\signal},\type) \in \{0,1\}$ denote a type-$\type$ agent's acceptance decision if the principal adopts the test $t$, reports $\tilde{\signal}$, and makes a proposal. If the reported signal $\tilde{\signal}$ may induce the principal's proposal on the equilibrium path, the type-$\type$ agent accepts the proposal (i.e., $\rho_t(\tilde{\signal},\type)=1$) if and only if $\expect[\state\sim \posterior_t(\tilde{\signal})]{v(\state,\type)} \geq 0$.\footnote{In the off-equilibrium-path scenario where the principal proposes the project with a message that would not be accompanied by a proposal (i.e., a signal that would induce her to forgo the project), we have $\rho_t(\tilde{\signal},\type)=0$ according to our equilibrium concept.}


Upon seeing the actual signal $\signal$, the principal finds it optimal to propose the project if and only if $\expect[\state\sim \posterior_t(\signal)]{u(\state)} \geq 0$. If this condition does not hold, her expected payoff is zero regardless of what signal she reports. If it holds, her expected payoff from reporting the signal $\tilde\signal$ is $\expect[\state\sim \posterior_t(\signal)]{u(\state)} \cdot \prob{\rho_t(\tilde{\signal},\type) =1}$. Therefore, we let 
\begin{equation*}
\intP_t(\tilde{\signal}, \signal)=\expect[\state\sim \posterior_t(\signal)]{u(\state)} 
\cdot \prob{\rho_t(\tilde{\signal},\type) =1}
\cdot \mathbf{1}\{\expect[\state\sim \posterior_t(\signal)]{u(\state)} \geq 0\} 
\end{equation*}
be the principal's expected payoff from adopting the test $t$, receiving the signal $\signal$, reporting the signal~$\tilde{\signal}$, and making the optimal proposal decision based on her belief.

A necessary condition for such an equilibrium to exist is that the test must be trustworthy in the sense that the principal does not want to misreport the test result, as formally defined below.
\begin{definition}
\label{def:trustworthy}
A test $t=(\signals, \test)$ is \textbf{trustworthy} if $\intP_t(\signal, \signal) \geq \intP_t(\tilde{\signal}, \signal)$ for any $\tilde{\signal}, \signal \in \signals$.
\end{definition}

We can now formalize the principal's optimization problem as
\begin{problem-nonum}[\textbf{P}]\label{problem:P}  
~\\ \vspace{-55pt}
\begin{align}
\max_{t \in \tests} \quad & \expect[\signal]{\intP_t(\signal, \signal)} \nonumber \\ %
\text{\rm s.t.} \quad & \intP_t(\signal, \signal) \geq \intP_t(\tilde{\signal}, \signal), \ \forall \tilde\signal\neq\signal.  \tag{IC-P}\label{icp}
\end{align}
\end{problem-nonum}
Notably, its only departure from classic information design problems is the additional constraint (\ref{icp}), which stems from the principal's limited commitment in our setting. Without this constraint, finding the optimal test would be challenging because the agent has a private type, the state is continuous, and the players' payoffs are non-linear in the state --- a complex setting of information design where no solution is known even for binary actions.\footnote{Previous information design models that feature a privately informed receiver and a continuous state space focus primarily on linear persuasion problems (e.g., \cite{kolotilin2022censorship,candogan2023optimal}). In these problems, the sender's payoff depends only on the expected state, standing in contrast to our setting.} However, it turns out that the constraint (\ref{icp}), which represents limited commitment, drastically simplifies the analysis and makes the problem tractable, as indicated by the following theorem.


\begin{theorem}[Limited commitment makes simplicity]
\label{prop:binary}
It is without loss of optimality for the principal to adopt a \textbf{binary test} (i.e., a test with two realizable signals) and propose the project only under one of the two signals.\footnote{Our definition of binary tests includes the degenerate case with only one signal (i.e., the test is completely uninformative). This case can be viewed as letting the probability of sending the other signal be zero.} 
\end{theorem}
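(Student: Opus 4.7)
My plan is to show that given any test $t=(\signals,\test)$ satisfying \eqref{icp}, there exists a binary test $t'$ whose truthful equilibrium delivers the principal the same expected payoff. The construction partitions $\signals$ by the principal's proposal decision under $t$: let $\signals_p := \{\signal\in\signals \suchthat \expect[\state\sim\posterior_t(\signal)]{u(\state)} > 0\}$ collect the strict-propose signals, and let $t'$ have a ``proposal'' signal $\signal_p$ carrying the average posterior $\bar{\posterior}_p$ of $\posterior_t(\signal)$ over $\signal\in\signals_p$ (weighted by the marginal distribution of $\signal$) and a ``null'' signal $\signal_f$ carrying the average posterior over the complement. Bayes plausibility of $t'$ then follows from that of $t$.

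The first key step invokes \eqref{icp} to pin down a common acceptance probability across strict-propose signals. For any $\signal,\tilde\signal\in\signals_p$, the payoff from reporting $\tilde\signal$ at true signal $\signal$ equals $\expect[\state\sim\posterior_t(\signal)]{u(\state)}\cdot\prob{\rho_t(\tilde\signal,\type)=1}$, which is strictly positive because $\expect[\state\sim\posterior_t(\signal)]{u(\state)}>0$; truthfulness therefore demands $\prob{\rho_t(\signal,\type)=1}\geq\prob{\rho_t(\tilde\signal,\type)=1}$, and swapping the roles of $\signal$ and $\tilde\signal$ yields equality, so there is a common acceptance probability $q^*$ across all of $\signals_p$. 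The second key step exploits monotonicity: because $v(\state,\type)$ is strictly increasing in $\type$, the map $\type\mapsto\expect[\state\sim\posterior]{v(\state,\type)}$ is strictly increasing for every posterior $\posterior$, so the acceptance set at $\posterior$ is always a closed upper ray in $\types$. Two upper rays with the same $\typedist$-measure must coincide, so every signal in $\signals_p$ shares a single acceptance set $A^*$ of mass $q^*$.

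It remains to verify that $t'$ inherits the original payoff. At $\signal_p$, $\bar{\posterior}_p$ is a convex combination of posteriors each with strictly positive expected $u$, so $\expect[\state\sim\bar{\posterior}_p]{u(\state)}>0$ and the principal proposes. For every $\type\in A^*$ and every $\signal\in\signals_p$, $\expect[\state\sim\posterior_t(\signal)]{v(\state,\type)}\geq 0$; averaging over $\signal$ preserves this sign, so $\type$ accepts at $\bar{\posterior}_p$. The reverse inequality holds signal-by-signal for $\type\notin A^*$, so such types reject. Hence the acceptance set and probability at $\signal_p$ are exactly $A^*$ and $q^*$. Symmetrically, $\expect[\state\sim\bar{\posterior}_f]{u(\state)}\leq 0$, so the principal forgoes at $\signal_f$; the binary test $t'$ is trivially trustworthy because its single propose signal leaves no room for misreporting; and its payoff $\prob{\signal\in\signals_p}\cdot\expect[\state\sim\bar{\posterior}_p]{u(\state)}\cdot q^*$ equals $\expect[\signal]{\intP_t(\signal,\signal)}$ by the law of iterated expectations, since marginal signals with $\expect[\state\sim\posterior_t(\signal)]{u(\state)}=0$ contribute zero to both expressions.

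The main potential obstacle is that merging several propose-signal posteriors could, in principle, lower the agent's acceptance probability at the merged posterior, since a given type could accept under some component posteriors but reject under others. The monotonicity of $v$ in $\type$---a standing assumption of the model---precisely neutralizes this worry: equal-measure acceptance sets must be identical upper rays, and hence remain stable under any convex combination of the component posteriors.
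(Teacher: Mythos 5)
Your proposal is correct and rests on exactly the same two ingredients as the paper's proof: the constraint \eqref{icp} forces all proposal-inducing signals to have equal acceptance probability, and the strict monotonicity of $v(\state,\type)$ in $\type$ makes each acceptance set an upper ray, so equal-measure acceptance sets coincide and the proposal signals can be merged without changing the outcome. The only difference is presentational---you argue by direct construction of the pooled binary test and verify the merged posteriors explicitly, whereas the paper argues by contradiction (two propose-signals with different acceptance cutoffs would make the test untrustworthy)---so this is essentially the paper's argument, written out slightly more carefully.
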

\begin{proof}
    See \Cref{pf:prop:binary}. 
\end{proof}

\Cref{prop:binary} is driven by the principal's limited commitment. Suppose, for the sake of contradiction, that the test has at least two distinct realizable signals, say $\signal', \signal'' \in \signals$, under which the principal will propose the project. Then the principal will benefit from either misreporting $\signal'$ as $\signal''$ or vice versa---she will always report the signal that induces more agent types to accept the proposal. Hence, such a test is not trustworthy. On the other hand, if there are two realizable signals under which the principal will forgo the project, without loss of optimality, they can be combined into a single signal. 


It is worth emphasizing that \Cref{prop:binary} is \textit{not} driven by the recommendation principle (see, e.g., \citealp{bergemann2019information}). Specifically, a general recommendation policy corresponds to a mapping from the agent's type space to his action space, which may require up to $2^{|\Lambda|}$ distinct realizable signals, where $|\Lambda|$ is the cardinality of the agent's type space.\footnote{Indeed, \Cref{prop:binary} holds more generally when the agent's action space is more complex. In \Cref{sub:multi-action}, we generalize the agent's action space to any arbitrary $\mathcal{A}$. While a recommendation policy may require up to $|\mathcal{A}|^{|\Lambda|}$ distinct realizable signals, the principal's limited commitment still enables us to focus on binary tests under some mild assumptions on the players' payoff functions.} Hence, the recommendation principle allows us to focus on binary tests only in the degenerate case where $|\Lambda| = 1$ (i.e., the agent's type is publicly known). By contrast, the limited-commitment-driven logic underlying \Cref{prop:binary} applies more generally to situations where the agent's type is private. 

We see the value of \Cref{prop:binary} as twofold. First, it paves the way for solving an information design problem that would otherwise be intractable under full commitment, enabling us to generate some insights on how learning and communication can facilitate unanimous consent when the principal faces uncertainty about the agent's preference. Second, it provides a plausible explanation for why tests adopted in real practice typically take a simple pass-or-fail form, even in complex situations where the agent privately knows his preference.\footnote{For instance, a typical consulting report on a business plan concludes with a clear recommendation on whether the plan is worth pursuing. Likewise, a think tank report assessing a policy often delivers a bottom-line judgment on whether the policy is worth moving forward with.} This stands in contrast to some earlier papers on persuading a privately informed receiver, where the characterized optimal test is rather complex, albeit delicate (see, e.g., \cite{candogan2023optimal}, which we will discuss with further details in \Cref{sec:private_menu}).

Building on \Cref{prop:binary}, the next theorem characterizes the optimal test. Hereafter, given a binary test that the principal adopts, we refer to the signal that induces her to propose the project as the \textit{proposal signal} and the other one as the \textit{null signal}. 

\begin{theorem}[Existence and characterization]
\label{prop:exist}
There exists an optimal test, represented by a set of states $\states_1^* \subseteq \states$. The test generates the proposal signal if $\state \in \states_1^*$ and the null signal if $\state \in \states_0^* := \states/\states_1^*$. The set $\states_1^*$ solves the following problem.\footnote{In this problem, the randomness associated with $\prob{\int_{\states_1} v(\state,\type) \dd\statedist(\state) \geq 0}$ comes from $\type$.} 
\begin{problem-nonum}[\textbf{P*}]\label{problem:Pstar}  
~\\ \vspace{-33pt}
\begin{align}
\max_{\states_1\subseteq \states} \quad & \int_{\states_1} u(\state) \dd\statedist(\state) \cdot \prob{\int_{\states_1} v(\state,\type) \dd\statedist(\state) \geq 0}  \nonumber \\
\text{\rm s.t.} \quad & \int_{\states/\states_1} u(\state) \dd\statedist(\state) \leq 0 \quad \text{ if } \states_1\neq \emptyset.   \tag{IC-P'} \label{icp2} \nonumber
\end{align}
\end{problem-nonum}
\end{theorem}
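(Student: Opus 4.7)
The plan is to build on \Cref{prop:binary} (which reduces the problem to binary tests) by further showing that the optimum is attained by a deterministic binary test, parameterized by a set $\states_1^*\subseteq\states$ where the proposal signal is generated, after which reading off Problem $P^*$ is immediate.

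First I would parameterize any binary test by the map $\test_1:\states\to[0,1]$ giving the probability of the proposal signal at each state, with $\test_0=1-\test_1$. The induced posteriors are $\posterior_1\propto\test_1\,\statedensity$ and $\posterior_0\propto\test_0\,\statedensity$. On path the principal proposes only after the proposal signal, and deviating on the proposal signal to report the null signal gives zero by the off-path pessimistic belief. The only non-trivial trustworthiness constraint is that she not deviate upward from the null signal: she would do so only if $\expect[\state\sim\posterior_0]{\util(\state)}\geq 0$, so (\ref{icp}) reduces to $\int(1-\test_1(\state))\util(\state)\dd\statedist(\state)\leq 0$. The principal's ex-ante payoff becomes
\[
J(\test_1)=\int\test_1(\state)\util(\state)\dd\statedist(\state)\cdot\prob[\type]{\int\test_1(\state)\val(\state,\type)\dd\statedist(\state)\geq 0}.
\]

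Second, I would show $J$ attains its supremum by a compactness-plus-upper-semicontinuity argument. Feasible $\test_1$'s live in the order interval $[0,1]$ of $L^\infty(\statedist)$, which is weakly-$\ast$ compact by Banach--Alaoglu (and metrizable since $L^1(\statedist)$ is separable), and the IC constraint is weakly-$\ast$ closed. Both $\test_1\mapsto\int\test_1\util\dd\statedist$ and $\test_1\mapsto\int\test_1\val(\cdot,\type)\dd\statedist$ are weakly-$\ast$ continuous; since $\mathbf{1}\{\cdot\geq 0\}$ is upper semicontinuous, reverse Fatou yields upper semicontinuity of the acceptance probability, and hence of $J$ on the region where $\int\test_1\util\dd\statedist\geq 0$ (which we may impose WLOG at any non-trivial optimum). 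Thus a maximizer $\test_1^*$ exists.

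Third, I would argue $\test_1^*$ may be taken $\{0,1\}$-valued. Let $\types^*=\{\type:\int\test_1^*\val(\cdot,\type)\dd\statedist\geq 0\}$ be the accepting types, and consider the inner linear program of maximizing $\int\test_1\util\dd\statedist$ over $\test_1\in[0,1]^\states$ subject to $\int\test_1\val(\cdot,\type)\dd\statedist\geq 0$ for $\type\in\types^*$ and to IC. Its Lagrangian integrand takes the form $\test_1(\state)\statedensity(\state)\varphi(\state)$ with $\varphi(\state)=(1+\nu)\util(\state)+\int\mu(\type)\val(\state,\type)\dd\typedist(\type)$ for non-negative multipliers $(\nu,\mu)$, so pointwise maximization forces $\test_1(\state)\in\{0,1\}$ wherever $\varphi(\state)\neq 0$. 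On $\{\varphi=0\}$ any fractional piece can be replaced by a $\{0,1\}$-valued function preserving all the relevant integrals via Lyapunov's convexity theorem (applicable because $\statedist$ is atomless), and setting $\states_1^*=\{\test_1^*=1\}$ yields a deterministic optimal test solving Problem $P^*$. The main obstacle is precisely this bang-bang step when $\types$ is a continuum: there are infinitely many acceptance constraints, and finite-dimensional Lyapunov does not apply directly, so one must either invoke strong duality to collapse $\mu$ to finite support or approximate $\types^*$ by increasing finite subsets and take a limit to legitimately collapse any fractional piece of $\test_1^*$ to a deterministic indicator while preserving feasibility of every acceptance constraint.
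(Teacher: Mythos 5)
Your overall architecture matches the paper's: reduce to binary tests via \Cref{prop:binary}, parameterize by the proposal probability $\omega(\state)\in[0,1]$, note that the only binding trustworthiness constraint is $\int(1-\omega)u\,\dd\statedist\leq 0$, prove existence by compactness and upper semicontinuity, and purify to a $\{0,1\}$-valued solution via a Lagrangian plus an atomless-measure argument. The existence step and the reduction of (\ref{icp}) are fine. The gap is exactly where you flag it, and it is not a technicality you can wave at: your bang-bang step carries the acceptance constraint $\int\omega\,v(\cdot,\type)\dd\statedist\geq 0$ for \emph{every} accepting type $\type\in\types^*$, which leaves you with an infinite-dimensional constraint set, a multiplier measure $\mu$ over types, and a Lyapunov argument that does not apply. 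Neither of your proposed rescues is worked out, and the finite-approximation route is delicate because perturbing $\omega$ to preserve finitely many acceptance integrals can break the others, changing the accepting set and hence the objective.

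The paper avoids this entirely by using the standing assumption that $v(\state,\type)$ is strictly increasing in $\type$: the set of accepting types is always an upper cutoff set $\{\type\geq\type^*\}$, so the optimal $\omega^*$ must also solve the auxiliary problem of maximizing $\int\omega\,u\,\dd\statedist$ subject to the \emph{single} acceptance constraint for the marginal type $\type^*$ (Problem~(\nameref{problem:P2})/(\nameref{problem:P3}) in the paper). With one scalar constraint, the Lagrangian $u(\state)+\eta\,v(\state,\type^*)$ pins down $\omega$ off the degenerate set $Z^*=\{\omega^*\in(0,1)\}$, and on $Z^*$ one only needs a subset $Z^{**}$ matching a single integral $\int_{Z^{**}}v(\cdot,\type^*)\dd\statedist$, which atomlessness of $\statedist$ delivers directly; any types below $\type^*$ that are incidentally induced to accept only help. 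To close your proof you should import this cutoff observation rather than fight the continuum of constraints: it is the monotonicity of $v$ in $\type$, not duality or an approximation scheme, that collapses $\types^*$ to one marginal constraint and makes the purification elementary.
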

\begin{proof}
See \Cref{pf:prop:exist}. 
\end{proof}

\Cref{prop:exist} shows the optimality of a \textit{deterministic binary test} (i.e., one that does not randomize between two signals for any $\state$). This test dichotomizes the state space into a set of \textit{proposal states} $\states_1^*$ and a set of \textit{null states} $\states_0^*$. The objective function of the embedded optimization problem represents the principal's expected payoff from proposing the project upon seeing the proposal signal, given the agent's best response. The constraint (\ref{icp2}), adapted from (\ref{icp}), guarantees that the principal does not want to propose the project upon seeing the null signal.\footnote{We do not consider the other direction of misreporting because a test yielding the principal a negative payoff under the proposal signal will never be optimal for this problem, as the principal has the option to choose $\states_1 = \emptyset$, which secures her a zero expected payoff.}

The characterization result in \Cref{prop:exist} gives rise to two corollaries. 

\begin{corollary}
\label{coro:nofulllearn}
The principal finds it optimal to commit to ``learning less.'' 
\end{corollary}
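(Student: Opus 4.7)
The plan is to combine \Cref{prop:binary} with a concrete instance exhibiting a strict payoff gap between a fully revealing test and some binary test. Since \Cref{prop:binary} makes binary tests without loss of optimality and every binary test on the atomless space $\states=[-1,1]$ is strictly less informative than a fully revealing test, it suffices to exhibit one setting in which full learning is \emph{strictly} dominated by a feasible binary test.

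First, I will pin down what a fully revealing test can transmit in any equilibrium of the continuation game. With signal $\state$, the principal's posterior is $\delta_{\state}$, so her endogenous proposal decision equals $\mathbf{1}\{u(\state)\geq 0\}$, which already reveals to the agent whether $\state\in\{u\geq 0\}$. Any cheap-talk refinement must respect \eqref{icp}, but upon a truthful report $\tilde\state$ the agent's acceptance probability collapses to $\prob[\type]{v(\tilde\state,\type)\geq 0}$, a function of the reported state alone; a short incentive check then shows that, generically, no credible further refinement is possible. Consequently, the outcome under full learning coincides with that of the binary test $\states_1=\{u\geq 0\}$, which crucially cannot pool positive- and negative-$u$ states under the proposal signal.

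For the strict-gap step, I will specialize to $u(\state)=\state$, $v(\state,\type)=\type-\state$ with $\type\sim\mathrm{Uniform}[0,\tfrac12]$ and $G$ uniform on $[-1,1]$. Under full learning the agent's posterior after a proposal is uniform on $[0,1]$, so his conditional expected payoff $\type-\tfrac12$ is nonpositive for every $\type\in[0,\tfrac12]$ and the principal's expected payoff is zero. I will then verify that the binary test $\states_1=[-1+2\epsilon,1]$ for small $\epsilon>0$ pools some negative-$u$ states into the proposal region, satisfies \eqref{icp2}, has proposal-signal posterior $u$-mean equal to $\epsilon$, and induces acceptance probability $1-2\epsilon$, yielding a strictly positive expected payoff $\epsilon(1-\epsilon)(1-2\epsilon)$. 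Combining the two observations proves that committing to learning less is strictly preferred.

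The main obstacle is selecting an example that delivers a \emph{strict} gap rather than mere weak dominance. Positively aligned preferences do not suffice, because there the principal's endogenous proposal decision is already informative in a direction the agent likes and the continuation of full learning generically ties with the optimal binary test. Negatively aligned preferences create the needed wedge: full learning's proposal is bad news for the agent so he rejects, whereas a binary test can pool negative- and positive-$u$ states under a single signal and thereby restore his willingness to accept.
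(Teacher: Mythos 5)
Your proposal is correct and follows the same route as the paper: the paper's justification of \Cref{coro:nofulllearn} is the observation that the optimal test characterized in \Cref{prop:exist} is a coarse deterministic binary test, and that full learning collapses to the binary test with proposal states $\{\state \given u(\state)\geq 0\}$ (since in any continuation equilibrium every message sent alongside a proposal must induce the same acceptance probability, by the same logic as \Cref{prop:binary}), so full learning is weakly, and sometimes strictly, dominated. Where you go beyond the paper is in supplying an explicit instance of strictness: with $u(\state)=\state$, $v(\state,\type)=\type-\state$, $\type\sim\mathrm{Uniform}[0,\tfrac12]$ and $\state$ uniform on $[-1,1]$, full learning yields zero while $\states_1=[-1+2\epsilon,1]$ satisfies \eqref{icp2} (the null-signal mean is $\epsilon^2-\epsilon<0$) and yields $\epsilon(1-\epsilon)(1-2\epsilon)>0$; the paper instead leaves strictness as a conditional remark (``as long as $\states_1^*$ outperforms $\states_1^p$''), with concrete instances appearing only implicitly elsewhere (e.g., the environment $\tilde{\mathcal{E}}$ of \Cref{fig:strictimprove}). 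The one step you should tighten is the ``short incentive check'' asserting that no credible refinement beyond the proposal decision is possible under full learning: the clean argument is that any two on-path proposal messages must induce identical acceptance probabilities (otherwise the principal misreports), which for a continuous type distribution pins down a single effective proposal posterior — this is exactly the \Cref{prop:binary} argument and should be cited as such rather than left generic.
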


While learning is costless for the principal, the characterized optimal test in \Cref{prop:exist} is a coarse test that does not fully reveal the state. The principal benefits from committing to ``learning less,'' as it relaxes her truth-telling constraint in the cheap-talk communication stage. Indeed, fully learning the state may be strictly sub-optimal for the principal. If the principal fully learns the state, her optimal continuation strategy is to propose the project in the states $\states_1^{p}:=\{\state\given u(\state)\geq 0\}$ and forgo it in the remaining states. This is analogous to using a binary test with proposal states $\states_1^{p}$ in our model. Hence, fully learning the state is strictly sub-optimal as long as $\states_1^*$ outperforms $\states_1^{p}$.\footnote{Some existing papers make a similar point that the sender in cheap-talk communication may benefit from being imperfectly informed, but they are based on different settings from ours. See, e.g., \cite{fischer2001imperfect}, \cite{ivanov2010informational}.}

\Cref{coro:nofulllearn} also suggests the value of ex-ante commitment in the absence of ex-post commitment. In an alternative game where the principal does not have both types of commitment, it is in her interest to fully learn the state, and the reasoning in the previous paragraph follows. Hence, the value of the principal's ex-ante commitment can be quantified by the difference in her payoff between adopting $\states_1^*$ versus $\states_1^{p}$.

Before stating the next corollary, we define an \textit{environment} as a tuple $\mathcal{E}=\{u, v\}$ that includes the players' payoff functions. Given an environment $\mathcal{E}$, we let $\states_1^*(\mathcal{E})$ represent the proposal states of an optimal deterministic binary test, $\states_0^*(\mathcal{E}):=\states/\states_1^*(\mathcal{E})$ the null states, and $\mathcal{U}(\mathcal{E})$ the principal's expected payoff under the optimal test. 


\begin{corollary}
\label{coro:overturn}
The principal may benefit from reduced payoffs in some states. Formally, let $\mathcal{E}=\{u, v\}$ and $\tilde{\mathcal{E}}=\{\tilde u, \tilde v\}$ satisfy $v=\tilde v$. If $\tilde u(\state) \leq u(\state)$ for $\state \in \states_0^*(\mathcal{E})$ and $\tilde u(\state) = u(\state)$ for $\state \in \states_1^*(\mathcal{E})$, then $\mathcal{U}(\tilde{\mathcal{E})} \geq \mathcal{U}(\mathcal{E})$. 
\end{corollary}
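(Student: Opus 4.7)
The plan is to establish the inequality via a simple feasibility-and-comparison argument: I will show that the optimal proposal set $\states_1^*(\mathcal{E})$ from environment $\mathcal{E}$ remains a feasible candidate in the problem associated with $\tilde{\mathcal{E}}$, and that it attains the same objective value there. Together these two facts imply $\mathcal{U}(\tilde{\mathcal{E}}) \geq \mathcal{U}(\mathcal{E})$ via the characterization in \Cref{prop:exist}.

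\textbf{Step 1: Feasibility.} Treating $\states_1 := \states_1^*(\mathcal{E})$ as a candidate proposal set for the problem \textbf{P*} in environment $\tilde{\mathcal{E}}$, I need to verify the constraint (\ref{icp2}) with $u$ replaced by $\tilde u$. If $\states_1^*(\mathcal{E}) = \emptyset$, then $\mathcal{U}(\mathcal{E})=0$ and the inequality is trivial (the empty set is feasible in $\tilde{\mathcal{E}}$). Otherwise, since $\tilde u(\state) \leq u(\state)$ on $\states_0^*(\mathcal{E}) = \states \setminus \states_1^*(\mathcal{E})$, I get
\[
\int_{\states \setminus \states_1^*(\mathcal{E})} \tilde u(\state) \dd \statedist(\state) \;\leq\; \int_{\states \setminus \states_1^*(\mathcal{E})} u(\state) \dd \statedist(\state) \;\leq\; 0,
\]
where the last inequality is (\ref{icp2}) satisfied by $\states_1^*(\mathcal{E})$ under $\mathcal{E}$. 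Hence $\states_1^*(\mathcal{E})$ is feasible under $\tilde{\mathcal{E}}$.

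\textbf{Step 2: Value comparison.} Evaluating the objective of problem \textbf{P*} for environment $\tilde{\mathcal{E}}$ at the candidate set $\states_1^*(\mathcal{E})$, I use $\tilde v = v$ everywhere and $\tilde u(\state) = u(\state)$ on $\states_1^*(\mathcal{E})$ to obtain
\[
\int_{\states_1^*(\mathcal{E})} \tilde u(\state) \dd\statedist(\state) \cdot \prob{\int_{\states_1^*(\mathcal{E})} \tilde v(\state,\type) \dd\statedist(\state) \geq 0} \;=\; \mathcal{U}(\mathcal{E}).
\]
Since this is a lower bound on the value of the principal's problem under $\tilde{\mathcal{E}}$, I conclude $\mathcal{U}(\tilde{\mathcal{E}}) \geq \mathcal{U}(\mathcal{E})$.

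There is no real technical obstacle here: the corollary rests on the observation that reducing $u$ on the null states only relaxes the trustworthiness constraint while leaving both the proposal-state objective and the agent's acceptance probability unchanged. The only subtlety worth flagging is the degenerate case $\states_1^*(\mathcal{E}) = \emptyset$, handled separately above, since the constraint (\ref{icp2}) is vacuous in that case.
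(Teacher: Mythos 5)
Your argument is correct and is essentially identical to the paper's own proof: both show that $\states_1^*(\mathcal{E})$ remains feasible under $\tilde{\mathcal{E}}$ because lowering $\tilde u$ on the null states only relaxes the constraint (\ref{icp2}), and that it attains the same objective value since $\tilde u = u$ on $\states_1^*(\mathcal{E})$ and $\tilde v = v$. Your explicit treatment of the degenerate case $\states_1^*(\mathcal{E}) = \emptyset$ is a small, harmless addition of detail.
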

\begin{proof}
See \Cref{pf:coro:overturn}.     
\end{proof}

As suggested by \Cref{coro:overturn}, given an environment $\mathcal{E}$, reducing the principal's payoffs in the states $\states_0^*(\mathcal{E})$ turns out to (weakly) benefit the principal. Intuitively, $\states_0^*(\mathcal{E})$ are the states that tend to induce the principal to forgo the project, so reduced payoffs in these states weaken her incentive to misreport upon seeing the null signal, thereby relaxing her truth-telling constraint and potentially improving her welfare. This comparative statics result contrasts sharply with the conventional wisdom that reducing state-wise payoffs can only harm the principal.


Although \Cref{coro:overturn} features weak inequalities, \Cref{fig:strictimprove} provides one example where the principal \textit{strictly} benefits from reduced payoffs in some states. In this example, the state $\state$ follows a uniform distribution over $[-1,1]$, and the agent has only one type. In the environment $\mathcal{E}$, the principal's payoff is $1+2\epsilon$ if $\state\geq 0$ and $-1$ if $\state < 0$; the agent's payoff is $-1-\epsilon$ if $\state \geq 0$ and $1$ if $\state < 0$. If the proposal states $\states_1^*(\mathcal{E}) \neq \emptyset$, it must include strictly more negative states compared to positive ones in order to make the agent accept. However, this indicates that the corresponding null states, $\states_0^*(\mathcal{E})$, must include strictly more positive states compared to negative ones, which makes the principal benefit from misreporting upon seeing the null signal, rendering the test untrustworthy. Hence, it must hold that $\states_1^*(\mathcal{E}) = \emptyset$, indicating that the project can never be launched in this environment, and the principal's expected payoff is $\mathcal{U}(\mathcal{E}) = 0$. 

Now, we switch to the environment $\tilde{\mathcal{E}}$, whose only departure from $\mathcal{E}$ is the reduction of the principal's payoff from $1+2\epsilon$ to $1-2\epsilon$ for $\state\in(\frac{1}{2}, 1]$. In this new environment, the principal benefits from using a binary test whose proposal states are $\states_1^* = [-\frac{1+\epsilon}{2}, \frac{1}{2}]$. Under this test, the agent is willing to accept the proposal, and more importantly, the principal's expected payoff from the null states, $\states_0^* = [-1, -\frac{1+\epsilon}{2}) \cup (\frac{1}{2}, 1]$, becomes strictly negative, deterring her from misreporting upon seeing the null signal. Indeed, we can show that it is the principal's optimal test, yielding her a positive expected payoff of $\mathcal{U}(\tilde{\mathcal{E}}) = \frac{\epsilon}{4} > \mathcal{U}(\mathcal{E})$. 

\begin{singlespace}
\begin{figure}[htbp]
    \begin{center}
	\subfigure[Environment $\mathcal{E}$]{
		\begin{minipage}[t]{0.45\textwidth}
        \centering
		\begin{tikzpicture}[xscale=0.22, yscale = 0.38]
						\draw [thick, ->] (-11,-5.5) -- (-11,6.5);
			\draw [thick, ->] (-11,0) -- (11,0) node [right] {\scriptsize  $\theta$};
			\draw [-] (-10,0.1) -- (-10,-0.1) node [below] {\scriptsize $-1$};
			\draw [-] (10,0.1) -- (10,-0.1) node [below] {\scriptsize $1$};
			\draw [-] (0,0.1) -- (0,-0.1) node [below] {\scriptsize $0$};
			\node [left] at (-11,3) {\scriptsize  $1$};
			\node [left] at (-11,-3) {\scriptsize  $-1$};
			\node [left] at (-11,-4) {\scriptsize  $-1-\epsilon$};
			\node [left] at (-11,5) {\scriptsize  $1+2\epsilon$};

			\draw [ultra thick, blue] (0,5) -- (10,5) node [right] {\scriptsize  $u(\theta)$};
			\draw [ultra thick, red, dashed] (0,-4) -- (10,-4) node [right] {\scriptsize  $v(\theta, \type)$};
			\draw [ultra thick, blue] (0,-3) -- (-5,-3);
			\draw [ultra thick, red, dashed] (0,3) -- (-10,3);
            \draw [ultra thick, blue] (-10,-3) -- (-5,-3);

			\draw [dotted] (-11,5) -- (0,5) -- (0,0);
			\draw [dotted] (-11,3) -- (-10,3) -- (-10,0);
			\draw [dotted] (10,5) -- (10,0);
			\draw [dotted] (-11,-3) -- (-10,-3) -- (-10,-1.5);
			\draw [dotted] (10,-4) -- (10,-1.5);
			\draw [dotted] (-11,-4) -- (0,-4) -- (0,-1.5);
            
		\end{tikzpicture}    
		\end{minipage}
	}
	\subfigure[Environment $\tilde{\mathcal{E}}$]{
		\begin{minipage}[t]{0.45\textwidth}
       \centering
		\begin{tikzpicture}[xscale=0.22, yscale = 0.38]
						\draw [thick, ->] (-11,-5.5) -- (-11,6.5);
			\draw [thick, ->] (-11,0) -- (11,0) node [right] {\scriptsize  $\theta$};
			\draw [-] (-10,0.1) -- (-10,-0.1) node [below] {\scriptsize $-1$};
			\draw [-] (10,0.1) -- (10,-0.1) node [below] {\scriptsize $1$};
            			\draw [-] (5,0.1) -- (5,0) node [below] {\scriptsize $\frac{1}{2}$};
			\draw [-] (0,0.1) -- (0,-0.1) node [below] {\scriptsize $0$};
			\node [left] at (-11,3) {\scriptsize  $1$};
			\node [left] at (-11,-3) {\scriptsize  $-1$};
			\node [left] at (-11,5) {\scriptsize  $1+2\epsilon$};
			\node [left] at (-11,1) {\scriptsize  $1-2\epsilon$};
            \node [left] at (-11,-4) {\scriptsize  $-1-\epsilon$};

			\draw [ultra thick, blue] (0,5) -- (5,5);
			\draw [ultra thick, blue] (5,1) -- (10,1) node [right] {\scriptsize  $\tilde{u}(\theta)$};
			\draw [ultra thick, red, dashed] (0,-4) -- (10,-4) node [right] {\scriptsize  $\tilde{v}(\theta, \type)$};
			\draw [ultra thick, blue] (0,-3) -- (-5,-3);
			\draw [ultra thick, red, dashed] (0,3) -- (-10,3);
            \draw [ultra thick, blue] (-10,-3) -- (-5,-3);

			\draw [dotted] (-11,5) -- (0,5) -- (0,0);
			\draw [dotted] (-11,3) -- (-10,3) -- (-10,0);
			\draw [dotted] (10,1) -- (10,0);
			\draw [dotted] (-11,-3) -- (-10,-3) -- (-10,-1.5);
            \draw [dotted] (-11,1) -- (5,1) -- (5,5);
            \draw [dotted] (5,0) -- (5,1);
			\draw [dotted] (10,-4) -- (10,-1.5);
			\draw [dotted] (0,-3) -- (0,-1.5);
			\draw [dotted] (-11,-4) -- (0,-4) -- (0,-1.5);
            
		\end{tikzpicture}    
		\end{minipage}
        }	
    \end{center}
\caption{Illustration of the example where the principal strictly benefits from reduced payoffs in some states. The state $\state$ follows a uniform distribution over $[-1,1]$. In each panel, the blue solid curve and the red dashed curve depict the principal's and the agent's payoff functions, respectively.}
\label{fig:strictimprove}
\end{figure}
\end{singlespace}

\section{Preference Alignment}
\label{sec:prefalign}

The previous section directs our attention to deterministic binary tests. In this section, we derive sharper predictions for two prevalent special cases in which players' preferences over the states are either positively or negatively aligned. 

Positive alignment of preferences arises when players share the same ordinal ranking of the states but may differ in their cardinal valuations. Two examples from the introduction illustrate this case. In the CEO–board setting, both parties prefer a more profitable business plan, yet they may disagree on the level of profitability that warrants investment. Similarly, in the researchers' example, both the leading researcher and the potential collaborator favor projects with stronger publication prospects, but it may be the case that the leading researcher is willing to proceed as long as the project has a chance in a decent journal, whereas the collaborator may find it worthwhile only if it can hit a top-tier outlet.

Negative alignment of preferences, by contrast, occurs when players' ordinal rankings of the states are opposite: a state that is more desirable for one player is necessarily less desirable for the other. The remaining two examples in the introduction fall into this category. In the political parties' example, if the unknown state represents the prospective policy's side effect in an upcoming election, then an effect that benefits one party necessarily harms the other. In the manager–worker setting, the worker prefers an easy task, whereas the manager may benefit from a difficult one.

To study these two cases, we normalize the principal's payoff to be $u(\state)=\state$ in this section. In other words, the states are ordered by the principal's preference. 
\begin{definition}
\label{def:align}
The players' preferences are \textbf{positively aligned} if $v(\state, \type)$ is increasing in $\state$ for any $\type$ and \textbf{negatively aligned} if $v(\state, \type)$ is decreasing in $\state$ for any $\type$. 
\end{definition}

For these two cases that are prevalent in reality, it would be challenging to find the principal's optimal test in a standard information design problem where she had full commitment, particularly due to the combination of the agent's private type, the continuity of the state space, and the non-linearity of the players' payoff functions.\footnote{Specifically, many existing results on linear persuasion problems (e.g., \cite{kolotilin2022censorship}) cannot apply to our setting, even in the two special cases in \Cref{sec:prefalign}.} However, owing to limited commitment in our setting, which directs us to focus on binary tests, the problem becomes tractable. 

\begin{definition}\label{def:monotone_test}
(a) A \textbf{threshold test} is a deterministic binary test whose set of proposal states $\states_1^*$ is $[\hat{\state},1]$ for some partition threshold $\hat{\state}$. \\ 
(b) An \textbf{interval test} is a deterministic binary test whose set of proposal states $\states_1^*$ is $[\underline{\state}, \bar{\state}]$ for some $\underline{\state} < \bar{\state}$. \\
(c) A \textbf{tail test} is a deterministic binary test whose set of proposal states $\states_1^*$ is $[-1, \underline{\state}]\cup[\bar{\state},1]$ for some $\underline{\state} < \bar{\state}$. 
\end{definition}


Notice that a threshold test can be viewed both as a degenerate interval test with $\bar{\state}=1$ and as a degenerate tail test with $\underline{\state}<-1$.\footnote{In the definition of tail tests, we allow for the degenerate cases of $\underline{\state} < -1$, where the left tail is excluded, and of $\bar{\state}>1$, where the right tail is excluded.} 

\begin{theorem}[Positive alignment]
\label{thm:positivealign}
If players' preferences are positively aligned, the optimal test is a threshold test with $\states_1^*=[\state^*, 1]$, where $\state^*\geq 0$. 
\end{theorem}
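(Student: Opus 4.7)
The plan is to leverage \Cref{prop:exist}, which reduces the principal's problem to choosing a set of proposal states $\states_1 \subseteq \states$ subject to the complement satisfying (\ref{icp2}). I will establish two things in sequence: first, that any feasible $\states_1$ can be weakly improved by rearranging its $\statedist$-mass into an upper interval $[\state^*, 1]$; and second, that among such upper intervals the optimal threshold satisfies $\state^* \geq 0$.

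For the rearrangement step, fix any feasible $\states_1$ and let $p := \statedist(\states_1)$. Choose $\state^*$ so that $\statedist([\state^*, 1]) = p$ and set $\states_1' := [\state^*, 1]$. By the standard monotone rearrangement inequality, for any increasing function $h:\states\to\reals$, $\int_{\states_1'} h(\state) \dd\statedist(\state) \geq \int_{\states_1} h(\state) \dd\statedist(\state)$; intuitively, one swaps mass on $\states_1\setminus\states_1' \subseteq [-1,\state^*)$ with the equal-mass piece on $\states_1'\setminus\states_1 \subseteq [\state^*,1]$ and uses $h(\state)\geq h(\state^*) \geq h(\state'')$ for $\state\geq \state^*>\state''$. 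Applied with $h(\state)=\state$, this gives $\int_{\states_1'} \state \dd\statedist(\state) \geq \int_{\states_1} \state \dd\statedist(\state)$, so the principal's direct proposal payoff weakly rises; by complementation, $\int_{\states\setminus\states_1'} \state \dd\statedist(\state) \leq \int_{\states\setminus\states_1} \state \dd\statedist(\state) \leq 0$, so (\ref{icp2}) is preserved. Applied with $h(\state)=v(\state,\type)$, which is increasing in $\state$ under positive alignment, it yields $\int_{\states_1'} v(\state,\type) \dd\statedist(\state) \geq \int_{\states_1} v(\state,\type) \dd\statedist(\state)$ for every $\type$; hence every type who accepts under $\states_1$ still accepts under $\states_1'$, the acceptance probability weakly rises, and the objective of Problem (P*) weakly increases.

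For the sign of the optimal threshold, restrict attention to threshold tests and define $A(\state^*) := \int_{\state^*}^1 \state \dd\statedist(\state)$. On $\state^*\in[-1,0]$, $A'(\state^*) = -\state^* \statedensity(\state^*) \geq 0$, so $A$ is nondecreasing. Similarly, since $v(\cdot,\type)$ is increasing, the conditional expectation $\int_{\state^*}^1 v(\state,\type) \dd\statedist(\state) / \statedist([\state^*,1])$ is nondecreasing in $\state^*$, so the acceptance probability is nondecreasing in $\state^*$ as well. Constraint (\ref{icp2}) becomes $\int_{-1}^{\state^*} \state \dd\statedist(\state) \leq 0$, which holds automatically for $\state^*\in[-1,0]$ because the integrand is nonpositive there. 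Hence raising $\state^*$ from any negative value up to $0$ weakly improves both factors of the objective while preserving feasibility, forcing $\state^*\geq 0$ at the optimum.

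The main obstacle I anticipate is verifying that (\ref{icp2}) survives the rearrangement in the first step, since the constraint concerns the complement of the proposal set rather than the set itself. This is resolved cleanly by the complementation identity: the complement of a higher-lying set of equal mass is a lower-lying set of equal mass, whose principal expected payoff is weakly smaller, so (\ref{icp2}) only becomes easier to satisfy.
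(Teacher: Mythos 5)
Your proof is correct, but it takes a genuinely different route from the paper's. The paper first reduces to the relaxed Problem~(\nameref{problem:P3}) in which the marginal accepting type $\type^*$ is fixed, then invokes Lagrangian duality: since the integrand $\state+\eta\cdot v(\state,\type^*)$ is increasing under positive alignment, the bang-bang maximizer is an upper set $[\state^*,1]$, and $\state^*\geq 0$ follows from a complementary-slackness contradiction ($\state^*<0$ would force $v(\state^*,\type^*)>0$, hence a slack constraint, hence $\eta^*=0$, hence $\state^*=0$). You instead work directly on Problem~(\nameref{problem:Pstar}) via an equal-measure monotone rearrangement: replacing any feasible $\states_1$ by the upper interval of equal $\statedist$-mass weakly raises $\int h\dd\statedist$ for every increasing $h$, which simultaneously raises the principal's proposal payoff, preserves \eqref{icp2} by complementation, and weakly expands the set of accepting types --- so no reduction to a single marginal type and no multipliers are needed. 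Your approach is more elementary and self-contained (it sidesteps the existence of a Lagrange multiplier for an infinite-dimensional program and handles the acceptance probability over all types at once); the paper's Lagrangian machinery, by contrast, is what extends directly to the negative-alignment case of \Cref{thm:negativealign}, where the relevant structure is convexity or concavity of $\state+\eta v(\state,\type^*)$ rather than monotonicity, and a rearrangement argument would not go through. One small point worth making explicit in both of your improvement steps: concluding that the product $\int_{\states_1}u\dd\statedist\cdot\prob{\cdot}$ rises when both factors rise uses that the first factor is nonnegative, which holds without loss at the optimum since any $\states_1$ with $\int_{\states_1}u\dd\statedist<0$ is dominated by the degenerate test $\states_1=\emptyset$ (and, in your threshold step, by $\state^*=0$, where $A(0)\geq 0$).
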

\begin{proof}
See \Cref{pf:thm:positivealign}. 
\end{proof}

The intuition behind \Cref{thm:positivealign} is as follows. If we disregard the agent's acceptance decision, the principal's first-best strategy is to learn whether $\state\in[0,1]$ and propose the project whenever it is the case. This strategy, however, may work poorly in inducing acceptance when there is a high probability that the agent's expected payoff from $\state\in[0,1]$ is negative. In such situations, the principal benefits from excluding $[0,\state^*)$ from $\states_1^*$, where the threshold $\state^*$ is optimally chosen to balance the gain from inducing more agent acceptance against the loss from forgoing the project when it could have benefited the principal. 

\Cref{thm:positivealign} establishes the optimality of threshold tests in situations with positively aligned preferences. Importantly, this conclusion remains true even when the principal is uncertain about the level of preference alignment. This is consistent with observations in the two examples discussed above. In practice, a CEO may seek to persuade the board that a business plan's profit exceeds a critical threshold, despite being uncertain about the board's precise acceptance criteria. Similarly, a researcher may attempt to convince a potential collaborator that a project meets some standards, even without knowing the collaborator's exact rules for project screening. 

\begin{theorem}[Negative alignment]
\label{thm:negativealign}
Suppose players' preferences are negatively aligned. \\
(a) If $v(\state, \type)$ is concave in $\state$ for any $\type$, the optimal test is either $\states_1^*=\emptyset$ or an interval test with $\states_1^*=[\state^*, \state^{**}]$, where $\state^* \leq 0 \leq \state^{**}$. \\
(b) If $v(\state, \type)$ is convex in $\state$ for any $\type$, the optimal test is either $\states_1^*=\emptyset$ or a tail test with $\states_1^*=[-1, \state^{**}]\cup[\state^*, 1]$, where $\state^{**} \leq \state^{*}$. 
\end{theorem}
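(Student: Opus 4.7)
The plan is to build on the reduction of \Cref{prop:exist}, which identifies the optimal test with a subset $\states_1\subseteq\states$ solving Problem (P*). Under $u(\state)=\state$, the program becomes
\begin{equation*}
\max_{\states_1\subseteq\states}\; V(\states_1)\cdot P(\states_1)
\quad \text{s.t.}\quad V(\states_1)\geq \expect[\state\sim\statedist]{\state},
\end{equation*}
where $V(\states_1):=\int_{\states_1}\state\,\dd\statedist(\state)$ and $P(\states_1):=\prob[\type]{\int_{\states_1} v(\state,\type)\,\dd\statedist(\state)\geq 0}$.

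First, since $v(\state,\type)$ is strictly increasing in $\type$, the set of types that accept given $\states_1$ is of the form $[\type^*(\states_1),\sup\types]\cap\types$, where
\[
\type^*(\states_1):=\inf\Big\{\type\in\types\suchthat \int_{\states_1} v(\state,\type)\,\dd\statedist(\state)\geq 0\Big\}.
\]
Hence $P(\states_1)$ depends on $\states_1$ only through $\type^*(\states_1)$. I would therefore solve the problem in two stages: (i) for each candidate threshold $\type^*\in\types$, maximize $V(\states_1)$ over $\states_1$ satisfying (IC-P') together with $\int_{\states_1} v(\state,\type^*)\,\dd\statedist\geq 0$; then (ii) optimize over $\type^*$.

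Second, stage (i) is an infinite-dimensional linear program in $\sigma_1:\states\to[0,1]$ (the probability of the proposal signal at each state). Associating non-negative multipliers $\alpha,\beta\geq 0$ with the two linear constraints and maximizing the Lagrangian pointwise gives
\begin{equation*}
\sigma_1(\state)=\mathbf{1}\{h(\state)\geq 0\},\qquad h(\state):=(1+\alpha)\state+\beta\, v(\state,\type^*).
\end{equation*}
In case (a), $v(\cdot,\type^*)$ is concave, so $h$ is concave and $\{h\geq 0\}$ is an interval $[\state^*,\state^{**}]$. In case (b), $v(\cdot,\type^*)$ is convex, so $h$ is convex and $\{h\geq 0\}$ is a set of tail form $[-1,\state^{**}]\cup[\state^*,1]$ with $\state^{**}\leq\state^*$. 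This yields the claimed interval or tail structure.

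Third, for case (a) I would verify the sign restrictions $\state^*\leq 0\leq \state^{**}$. The inequality $\state^{**}\geq 0$ holds because $\state^{**}<0$ would force $V(\states_1)<0$, making the objective negative and thus dominated by $\states_1=\emptyset$. For $\state^*\leq 0$, suppose for contradiction $\state^*>0$. Since $v(\cdot,\type^*)$ is decreasing in $\state$, the constraint $\int_{\states_1}v(\state,\type^*)\,\dd\statedist\geq 0$ forces $v(\state^*,\type^*)>0$ (otherwise $v$ would be non-positive throughout $\states_1$). Extending $\states_1$ to $[\state^*-\epsilon,\state^{**}]$ with $\epsilon<\state^*$ then strictly increases $V(\states_1)$ (the added mass has positive $\state$) and also strictly increases $\int_{\states_1} v(\state,\type)\,\dd\statedist$ for every relevant $\type$ (the added mass has $v>0$ by monotonicity), hence raising $P(\states_1)$ and contradicting optimality.

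The main obstacle will be to rigorously justify the Lagrangian step of stage (i): infinite-dimensional LP duality requires verifying a constraint-qualification-type condition or giving a direct proof. In lieu of abstract duality, I would develop a direct swap/rearrangement argument: whenever the optimal $\states_1$ violates the conjectured interval/tail structure, one can find two small measurable pieces, one inside $\states_1$ and one inside $\states_0$, whose exchange preserves $V(\states_1)$ and (IC-P') while concavity (respectively, convexity) of $v$ guarantees a weak improvement in $\int v\,\dd\statedist$---and hence in $P(\states_1)$---allowing us to push $\states_1$ monotonically toward the interval or tail form. A secondary subtlety is the consistency of the outer maximization in $\type^*$: the $\states_1$ produced by stage (i) must indeed have acceptance threshold exactly $\type^*$, which I plan to settle using monotonicity and continuity of $\int_{\states_1} v(\state,\type)\,\dd\statedist$ in $\type$.
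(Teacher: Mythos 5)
Your proposal is correct and follows essentially the same route as the paper: reduce to the marginal accepting type $\type^*$, pass to a linear program whose Lagrangian is maximized pointwise, and observe that $\state+\eta\cdot v(\state,\type^*)$ inherits the concavity (resp.\ convexity) of $v$, so its upper level set is an interval (resp.\ a union of tails); your perturbation argument for $\state^*\leq 0\leq\state^{**}$ is a valid variant of the paper's multiplier-based contradiction. The only cosmetic difference is that you carry the constraint \eqref{icp2} into the Lagrangian with a second multiplier $\alpha$, whereas the paper first shows (in the proof of \Cref{lemma:bangbang}) that this constraint can be dropped; since $1+\alpha>0$, this does not change the shape of $h$ or the conclusion.
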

\begin{proof}
See \Cref{pf:thm:negativealign}. 
\end{proof}

Intuitively, under negatively aligned preferences, the principal's problem is to identify the states that optimally balance the dual objectives of generating a positive payoff for herself and inducing the agent's acceptance. When the agent is relatively more risk-averse than the principal,\footnote{Our notion of risk attitude concerns how a player values mean-preserving spreads in the posterior belief about $\state$, differing from the classic notion, which concerns how one values mean-preserving spreads in monetary lotteries.} reflected by the concavity of his payoff function (case (a)), intermediate states are the best option to propose, leading to the optimality of an interval test. For instance, in the manager-worker context, where the worker may disproportionately dislike excessively difficult tasks, this result implies that the principal tends to propose a task whose level of difficulty is certified to lie within a manageable band. 

By contrast, when the agent is relatively more risk-seeking, reflected by the convexity of his payoff function (case (b)), extreme states become more effective, directing the principal toward a tail test. In the example of political persuasion, where the involved parties may disfavor intermediate states,\footnote{For example, a moderate outcome may disproportionately benefit a third party that is not involved in the persuasion.} this finding suggests the theoretical possibility that consensus is more likely to be reached on policies that always favor one of the parties involved in the persuasion.

\section{Value of Screening}
\label{sec:private_menu}

Since the agent privately knows his type, a natural question is whether the principal can benefit from screening by offering a menu of tests for the agent to choose from. This section provides a positive answer. 

To illustrate such a possibility, we consider a special case where the principal's payoff is $u(\state) = \state$, and the agent's is $v(\state, \type) = \type - \state$. We assume $\type > 0$ to ensure that there exist some states mutually beneficial to both players. 
\begin{definition}
\label{def:exante}
In this special case, a type-$\type$ agent is \textbf{ex-ante optimistic} if $\type \geq \expect{\state}$ and \textbf{ex-ante pessimistic} if $\type < \expect{\state}$. 
\end{definition}
This definition specifies two scenarios that are distinguished by whether the agent prefers to launch the project based \textit{solely} on his prior belief about the state. In other words, if the principal \textit{always} proposes the project regardless of the test result, an ex-ante optimistic agent will accept the proposal while an ex-ante pessimistic one will reject it.

We begin by characterizing the optimal single test when the principal cannot perform screening, as in the baseline model. This will serve as the benchmark. 

\begin{proposition}
\label{thm:private_single}
Suppose the principal can only use a single test. \\
(a) If $\prob{\type \geq \expect{\state}} = 0$, then the optimal test is $\states_1^*=\emptyset$. \\
(b) If $\prob{\type \geq \expect{\state}} > 0$, then the optimal test is a threshold test.
\end{proposition}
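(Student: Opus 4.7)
The plan is to reformulate Problem P* in terms of the posterior mean and probability of the proposal signal, and then use a rearrangement argument to replace any non-threshold test by a threshold test with weakly higher value.

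By \Cref{prop:exist}, I can restrict attention to deterministic binary tests represented by a proposal set $\states_1\subseteq\states$. Let $\mu := \int_{\states_1}\state\,\dd\statedist(\state)$, $p := \statedist(\states_1)$, and, when $p>0$, let $\bar\state := \mu/p$ denote the posterior mean conditional on the proposal signal. With $v(\state,\type)=\type-\state$ the agent accepts iff $\type\geq \bar\state$, so the objective becomes $\mu\,(1-\typedist(\bar\state)) = \bar\state\, p\, (1-\typedist(\bar\state))$. With $u(\state)=\state$, (IC-P') reduces to $\mu \geq \expect{\state}$ when $\states_1\neq\emptyset$, and the principal's proposal condition is $\mu \geq 0$. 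Combined with $p\le 1$, these imply $\bar\state \geq \expect{\state}$ for every feasible non-empty test.

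For part (a), the hypothesis $\prob{\type \geq \expect{\state}} = 0$ together with $\type > 0$ almost surely forces $\expect{\state} > 0$. Then for any feasible $\states_1\neq\emptyset$, $\bar\state \geq \expect{\state}$ yields $\prob{\type \geq \bar\state} \leq \prob{\type \geq \expect{\state}} = 0$, so every non-empty test delivers zero value, matching $\states_1=\emptyset$. Hence $\states_1^*=\emptyset$ is optimal.

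For part (b), I take any feasible $\states_1$ with posterior mean $\bar\state\in[\expect{\state},1]$ and construct the threshold test $[\hat\state,1]$ by requiring $\expect{\state\given\state\geq\hat\state}=\bar\state$; such $\hat\state$ is unique because $\expect{\state\given\state\geq x}$ is continuous and strictly increasing in $x$, mapping $[-1,1)$ onto $[\expect{\state},1)$. The crux is a rearrangement step showing $p^T := 1-\statedist(\hat\state) \geq p$: among all subsets of $\states$ of $\statedist$-measure $p$, the integral $\int\state\,\dd\statedist(\state)$ is maximized by the top interval $[\hat\state',1]$ of the same measure, so $\bar\state\, p = \mu \leq (1-\statedist(\hat\state'))\,\expect{\state\given\state\geq\hat\state'}$, giving $\bar\state \leq \expect{\state\given\state\geq\hat\state'}$, and monotonicity of the conditional mean forces $\hat\state\leq\hat\state'$, i.e., $p^T\geq p$. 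Since the threshold test shares the posterior mean $\bar\state$ with the original test, the acceptance probability $1-\typedist(\bar\state)$ is unchanged, while $\mu^T = \bar\state\, p^T \geq \mu$; hence its value weakly dominates, and (IC-P') is preserved via $\mu^T \geq \mu \geq \expect{\state}$.

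The main obstacle is making the rearrangement step fully rigorous in the continuous-state setting and handling the boundary cases cleanly: $\bar\state=\expect{\state}$ corresponds to the trivial threshold $\hat\state=-1$ (no information), and $\bar\state$ near $1$ corresponds to a near-empty test. Once the rearrangement lemma is in place, the conclusion that the optimum is a threshold test follows immediately.
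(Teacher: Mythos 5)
Your proposal is correct and follows essentially the same route as the paper: recast each deterministic binary test as a pair (proposal probability, posterior mean), derive the necessary condition that the posterior mean on the proposal signal weakly exceeds $\expect{\state}$ from (IC-P'), conclude rejection for part (a), and for part (b) replace any test by the threshold test with the same posterior mean and weakly larger proposal probability. Your rearrangement step is precisely the content of the paper's feasibility lemma (Lemma~\ref{lem:feasible_posterior_mean}), so no genuine gap remains.
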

\begin{proof}
    See Appendix~\ref{pf:thm:private_single}. 
\end{proof}

The reason behind \Cref{thm:private_single}(a) is that the project can never be launched if the agent is \textit{always} ex-ante pessimistic. Intuitively, given the players' negatively aligned preferences, a proposal from the principal can only intensify the agent's pessimism about the state. Hence, if the agent is ex-ante pessimistic, he will become even more pessimistic upon receiving any proposal and thus reject it. To put this formally, an agent's posterior mean belief about $\state$ upon receiving a proposal must exceed the prior mean $\expect{\state}$, which, in part (a), is already assumed to be higher than $\type$ with certainty. 

By contrast, \Cref{thm:private_single}(b) indicates that the principal can benefit from adopting a test if the agent is ex-ante optimistic with positive probability. Since this special case features negatively aligned preferences with a linear agent payoff function, which is both convex and concave, the optimal test ends up being a threshold test, which is simultaneously a degenerate interval test (see \Cref{thm:positivealign}) and a degenerate tail test (see \Cref{thm:negativealign}), as pointed out following \Cref{def:monotone_test}.

Next, we present the optimal menu of tests if the principal can perform screening. 

\begin{theorem}[Value of screening]
\label{thm:private}
Suppose the principal can use a menu of tests. \\
(a) If $\prob{\type \geq \expect{\state}} = 0$, then it is optimal to offer a singleton test $\states_1^*=\emptyset$. \\
(b) If $\prob{\type \geq \expect{\state}} > 0$, then it is optimal to offer a menu consisting of up to three tests, including one threshold test plus zero, one, or two interval tests. 
\end{theorem}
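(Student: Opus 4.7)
I handle the two parts separately.

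For Part~(a), the plan is to show that no menu option with a nonempty proposal set can ever be accepted. Since $\type>0$ by assumption, the hypothesis $\prob{\type\geq \expect{\state}}=0$ forces $\expect{\state}>0$. Now consider any menu option with proposal set $\states_1\neq\emptyset$; by the menu analogue of the trustworthiness constraint \eqref{icp2}, it satisfies $\int_{\states_1}\state\,\dd\statedist(\state)\geq \expect{\state}$, so the associated acceptance cutoff $\expect{\state\given \state\in\states_1}\geq \expect{\state}/\statedist(\states_1)\geq \expect{\state}$ is not met by any type. Hence no option can ever induce acceptance, and the singleton menu $\{\states_1^*=\emptyset\}$ is optimal.

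For Part~(b), the first step is a menu-level extension of \Cref{prop:binary}: within each selected test, the principal still faces the post-reporting trustworthiness constraint, so each test in the optimal menu can be taken to be binary. Combined with the revelation principle, the menu is represented as a direct mechanism $\{\states_1(\type)\}_{\type\in\types}$ with agent IC. Next, I encode each test by the pair $(P,M):=(\statedist(\states_1),\int_{\states_1}\state\,\dd\statedist(\state))$: under $\util(\state)=\state$ and $\val(\state,\type)=\type-\state$, the agent's payoff from a test is $(\type P-M)_+$ and the principal's payoff conditional on acceptance is $M$. Feasibility plus trustworthiness impose $\max\{0,\expect{\state}\}\leq M\leq \bar{\mu}(P)$, where $\bar{\mu}(P):=\int_{\statedist^{-1}(1-P)}^{1}\state\,\dd\statedist(\state)$ is the maximal $M$ given $P$, attained by the top-$P$ threshold test.

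The problem then becomes a constrained screening problem with linear agent utility in $\type$. Standard envelope/Myersonian steps give $U(\type):=\type P(\type)-M(\type)$ convex and nondecreasing with $U'(\type)=P(\type)$ a.e., $P(\cdot)$ nondecreasing, and rewrite the expected principal payoff in terms of virtual types $\Phi(\type):=\type-(1-\typedist(\type))/\typedensity(\type)$. A Lagrangian analysis reveals that the feasible region in $(P,M)$ space has a concave upper boundary $M=\bar{\mu}(P)$ attained by threshold tests and a flat lower boundary $M=\max\{0,\expect{\state}\}$ set by trustworthiness. The optimal schedule then partitions the type space into (i)~an opt-out region ($P=0$) for the lowest types, (ii)~a middle regime where $M$ lies on the lower boundary and each $\states_1(\type)$ is realized as an interval $[\underline{\state},\overline{\state}]$ (since the pointwise Lagrangian in $\state$ is linear, $\states_1$ is the intersection of at most two half-lines), and (iii)~a high-type regime where $M$ lies on the upper boundary, forcing $\states_1(\type)$ to be a threshold $[\state^*,1]$. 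A final pooling argument caps the middle regime at two distinct interval-test levels, yielding a menu of at most three tests.

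The hardest part will be the ``at most two interval tests'' cap in the last step. This requires showing that any optimal schedule with more than two pools on the lower-boundary regime can be merged into at most two without loss. The argument should hinge on the linearity of the agent's utility in $\type$ and of the principal's payoff in $\state$, which together restrict the extremal structure of the Lagrangian and mirror the single-test characterization under negative alignment from \Cref{thm:negativealign}.
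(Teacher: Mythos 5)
Your Part~(a) is correct and essentially identical to the paper's argument: trustworthiness forces the posterior mean upon any proposal to weakly exceed $\expect{\state}$, which exceeds every type almost surely, so no proposal is ever accepted.

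Part~(b) sets up the right framework (binary tests via limited commitment, revelation principle, the $(P,M)$ parametrization with $\max\{0,\expect{\state}\}\leq M\leq\bar\mu(P)$, and the envelope reduction), but the step that actually delivers ``at most three tests'' is missing, and the structural claim you use in its place is false. You assert that the optimal schedule splits into an opt-out region, a middle regime where $M$ sits on the lower boundary $\max\{0,\expect{\state}\}$ (realized by interval tests), and a top regime on the upper boundary (threshold test), with a ``final pooling argument'' capping the middle regime at two levels. The paper's own three-test example in \cref{apx:three_test_menu} refutes this trichotomy: there the second test has $p_2=\frac{13}{21}$, $\posteriormean_2=\frac{4}{13}$, so $M_2=\frac{4}{21}$, which is strictly above the lower boundary $\expect{\state}=\frac{1}{6}$ and strictly below $\bar\mu(\frac{13}{21})=\frac{195}{882}$ --- an interior test that your partition does not accommodate. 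The reason is that the trustworthiness constraint does not need to bind type by type: since $p(\type)\posteriormean(\type)$ is nondecreasing by the envelope formula, it collapses to the single scalar constraint $p(\underline{\type})\underline{\type}\geq\expect{\state}$, and likewise feasibility only needs to be imposed at $\bar{\type}$. The correct counting argument is then that Problem~(\nameref{problem:Sprime}) is a linear program over nondecreasing functions $p(\cdot)$ subject to exactly two additional scalar constraints, so by Proposition~2.1 of \citet{winkler1988extreme} its extreme points are step functions with at most three distinct nonzero values; the binding feasibility constraint at $\bar{\type}$ makes the top step a threshold test and the remaining (up to two) steps interval tests. Your proposal acknowledges that the cap is ``the hardest part'' but offers only a vague appeal to linearity; without the extreme-point argument (or an equivalent bang-bang result for this LP), the bound of three is not established, and a virtual-value/Lagrangian route additionally presumes a type density that the theorem does not assume.
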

\begin{proof}
    See Appendix~\ref{pf:thm:private}. 
\end{proof}

Notice that introducing screening does not alter the principal's lack of ex-post commitment in the communication stage. Hence, the reasoning of \Cref{prop:binary} continues to work in this section, enabling us to focus on test menus that only consist of binary tests. As a consequence, \Cref{thm:private}(a) holds for the same reason as \Cref{thm:private_single}(a) --- an ex-ante pessimistic agent only becomes more pessimistic upon the principal's proposal, and thus the project is never launched under any menu of tests.

By contrast, \Cref{thm:private}(b) characterizes the optimal menu of tests when the agent is ex-ante optimistic with positive probability. The optimal menu includes a threshold test, which targets high-type agents; in addition, the principal may benefit from offering one or two interval tests targeting lower-type agents. Compared to the threshold test, these interval tests induce the principal's proposal with lower probability, but upon her proposal, the expected state is also more favorable to the agent. Therefore, a higher-type agent, who is less sensitive to the state, prefers the threshold test, whereas a lower-type agent, who is more sensitive to the state, will choose an interval test. \Cref{sec:example} provides an example to illustrate how screening is achieved.\footnote{The example in \Cref{sec:example} has two tests in the optimal menu. To demonstrate the tightness of our characterization in \cref{thm:private}, in \cref{apx:three_test_menu}, we provide a more complicated example where the optimal menu contains three tests.}

The most notable feature of the characterized optimal menu is its simplicity, which is particularly striking given that both the agent's type space $\types$ and the state space $\states$ are continuous. To highlight this point, we compare our result with earlier work on persuading a privately informed receiver that also emphasizes the value of screening.\footnote{There are also papers finding that screening does not benefit the sender in some specific settings, such as \citet{guo2019interval} and \citet{kolotilin2017persuasion}.} One such example is \cite{candogan2023optimal}, which shows that screening can be helpful in linear persuasion problems.\footnote{Relatedly, \citet{hancart2024optimal} also underscores the value of screening, but based on a different rationale. In \citet{hancart2024optimal}, the principal lacks access to a fully-revealing information structure, prompting the use of a test menu to extract additional information about the \textit{payoff-relevant state} through the agent's test choice. By contrast, the principal in our model can fully learn the relevant information, but wants to learn about the agent's \textit{preference type} via screening.} In their paper, the optimal test menu exhibits a laminar structure, with the number of tests growing linearly with the number of receiver types. By contrast, our optimal menu consists of at most three tests regardless of the richness of the agent's type space --- even when it is continuous. Moreover, each test takes a particularly simple form, namely a threshold test or an interval test.

It is worth emphasizing that the simplicity of our characterization is owing to our restriction to binary tests, which is driven by the principal's limited commitment. As demonstrated by the proof, the restriction to binary tests allows us to convert the problem to a linear optimization problem with three constraints, whose extreme points correspond to test menus that consist of up to three deterministic binary tests. This once again underscores the value of \Cref{prop:binary} in rendering an otherwise complex problem tractable and in yielding a solution that is simple and realistic.

\subsection{An Example Where Screening Is Helpful}
\label{sec:example}

Consider the following example. The state $\state$ is drawn from a uniform distribution on $[-1,1]$. The agent has two possible types, $\type_1 = \frac{2}{3}$ and $\type_2 = \frac{1}{3}$. The prior probabilities of these types are $q_1=1-\epsilon$ and $q_2=\epsilon$, respectively, where $\epsilon>0$ is a sufficiently small constant. 

In this example, the optimal menu consists of two tests: (1) a threshold test $t_1$ whose associated proposal states are $[0, 1]$, and (2) an interval test $t_2$ whose associated proposal states are $[\frac{1}{12}, \frac{7}{12}]$. With the test $t_1$, the project is proposed with probability $p_1 = \frac{1}{2}$, and the expected state upon proposal is $\posteriormean_1 = \frac{1}{2}$. With the test $t_2$, the project is proposed with probability $p_2 = \frac{1}{4}$, and the expected state upon proposal is $\posteriormean_2 = \frac{1}{3}$. It is not difficult to verify that it is incentive-compatible for a $\type_1$-type agent to choose $t_1$ and a $\type_2$-type agent to choose $t_2$. Moreover, as long as $\epsilon$ is sufficiently small, this test menu enables the principal to achieve the highest possible expected payoff, as the project yields her a non-negative payoff if and only if $\state \geq 0$. 

For comparison, suppose the principal can use only a single test in this example. \cref{thm:private_single} implies that the optimal single test is a threshold test. Hence, the principal may choose either the threshold $-\frac{1}{3}$ (so that the induced expected state upon proposal is $\frac{1}{3}$) to induce both agent types to accept the proposal or the threshold $0$ to serve only the high type. When $\epsilon$ is sufficiently small, it is optimal to adopt the latter threshold, which exactly corresponds to the test $t_1$. 

Apparently, the two-test menu outperforms the optimal single test. In particular, the addition of the interval test $t_2$ enables the project to be launched, yielding the principal a positive payoff, with positive probability when the agent's type is low. 

\section{Discussions}
\label{sec:extension}

\subsection{Richer Action Space}
\label{sub:multi-action}

In the baseline model, the agent faces a binary decision upon the principal's proposal. In many real-world settings, however, the agent's choice is richer than simple acceptance or rejection. For example, in the CEO-board context, the board may decide not only whether to invest but also how much to invest. To capture such situations, we generalize the model by enriching the agent's action space. 

Formally, we extend the baseline model by letting the agent choose an action $a \in \mathcal{A} \subseteq [0, \infty)$ upon the principal's proposal. We require $0 \in \mathcal{A}$, where the action $a=0$ corresponds to rejection in the baseline model; additionally, we let a larger action represent a higher level of agent engagement with the project. The principal's payoff takes the form of $\tilde{u}(\state, a) = u(\state) \cdot a$, so that she benefits from the agent's engagement if and only if she finds the state favorable (i.e., $u(\state)>0$). The agent's payoff becomes $\tilde{v}(\state, \type, a)$, satisfying $\tilde{v}(\state, \type, 0) = 0$, $\forall \state, \type$. We further assume that $\tilde{v}(\state, \type, a)$ (i) strictly increases and is linear in $\state$ and (ii) is super-modular in $\state$ and $a$.\footnote{One example of the agent's payoff function is $\tilde{v}(\state, \type, a) = (\state+\type)\cdot a - \frac{a^2}{2}$.} Because of its linearity in $\state$, the agent's optimal action depends on the mean of his posterior belief about the state; we let $a^*(\mu, \type)$ denote a type-$\type$ agent's optimal action when his posterior mean is $\mu$. Because of its monotonicity in $\state$ and super-modularity in $\state$ and $a$, we can infer that $a^*(\mu, \type)$ weakly increases in $\mu$ for any $\type$. 

In this generalized setting, the recommendation principle only allows us to focus on tests with up to $|\mathcal{A}|^{|\Lambda|}$ realizable signals, where $|\mathcal{A}|$ and $|\Lambda|$ are the cardinalities of the agent's action space and type space, respectively. However, owing to the principal's limited commitment, we can continue to focus on binary tests as in the baseline model. 
\begin{proposition}
\label{prop:binary_multi}
When the agent has a richer action space, it is without loss of optimality for the principal to adopt a \textbf{binary test} and propose the project only under one of the two realizable signals. 
\end{proposition}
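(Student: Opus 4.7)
The plan is to adapt the argument of \Cref{prop:binary} to the richer action space, exploiting the multiplicative structure $\tilde u(\state,a)=u(\state)\cdot a$ of the principal's payoff and the fact that, since $\tilde v$ is linear in $\state$, the agent's best response $a^*(\mu,\type)$ depends on the reported signal only through the induced posterior mean of $\state$. First I would generalize trustworthiness: after the principal observes signal $s$ under test $t$ and reports $\tilde s$, her expected payoff from making the optimal proposal decision is $\max\{0,\ \expect[\state\sim\posterior_t(s)]{u(\state)}\cdot \expect[\type]{a^*(\mu_{\tilde s},\type)}\}$, so she proposes truthfully iff $\expect[\state\sim\posterior_t(s)]{u(\state)}\geq 0$. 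Partition the realizable signals into a set $S_1$ under which she proposes truthfully and a set $S_0$ under which she forgoes.

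Next, for any two signals $s,s'\in S_1$ with $\expect[\state\sim\posterior_t(s)]{u(\state)}>0$, the trustworthiness inequalities in both directions collapse to $\expect[\type]{a^*(\mu_s,\type)}=\expect[\type]{a^*(\mu_{s'},\type)}=:\bar a$; applying the same logic to a pair $(s,s')\in S_1\times S_0$ yields $\expect[\type]{a^*(\mu_{s'},\type)}\leq \bar a$ for every forgo signal. I would then merge all of $S_1$ into a single signal $s^*$ and all of $S_0$ into a single signal $s_0$, producing a candidate binary test. The key observation is that super-modularity of $\tilde v$ in $(\state,a)$ combined with its linearity in $\state$ makes $a^*(\cdot,\type)$ weakly increasing in $\mu$ for every $\type$, so $\expect[\type]{a^*(\cdot,\type)}$ is weakly increasing in $\mu$; since it takes the common value $\bar a$ on $\{\mu_s:s\in S_1\}$, monotonicity forces it to equal $\bar a$ on the entire convex hull of that set, which contains the merged mean $\mu^*$. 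Hence $s^*$ still induces expected agent action $\bar a$, and the principal's ex-ante payoff $\bar a\cdot\sum_{s\in S_1}p_s\,\expect[\state\sim\posterior_t(s)]{u(\state)}$ is exactly preserved.

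The hard part will be verifying trustworthiness of the merged binary test and cleaning up a few boundary cases. The deviation from $s^*$ to $s_0$ is ruled out by the same monotonicity: $\mu_0$ lies in the convex hull of $\{\mu_{s'}\}_{s'\in S_0}$, so $\expect[\type]{a^*(\mu_0,\type)}\leq \sup_{s'\in S_0}\expect[\type]{a^*(\mu_{s'},\type)}\leq \bar a$. The deviation from $s_0$ to $s^*$ requires $\expect[\state\sim\posterior_t(s_0)]{u(\state)}\leq 0$, which is automatic in the generic case where every $s'\in S_0$ has $\expect[\state\sim\posterior_t(s')]{u(\state)}<0$. The subtle case is a forgo signal $s'$ with $\expect[\state\sim\posterior_t(s')]{u(\state)}\geq 0$ and $\expect[\type]{a^*(\mu_{s'},\type)}=0$; assuming $\bar a>0$ (else the original test is weakly dominated by the uninformative singleton test, which is already binary), trustworthiness of the original test at $s'$ forces $\expect[\state\sim\posterior_t(s')]{u(\state)}=0$ for any such $s'$, so averaging still gives $\expect[\state\sim\posterior_t(s_0)]{u(\state)}\leq 0$. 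Bayesian plausibility of the merged test is immediate because pooling signals preserves the marginal distribution of $\state$.
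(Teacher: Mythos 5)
Your proposal is correct and follows essentially the same route as the paper's proof: trustworthiness forces every proposal signal to induce the same expected agent action $\expect[\type]{a^*(\mu_s,\type)}$ (via monotonicity of $a^*$ in the posterior mean), after which proposal signals and forgo signals can each be pooled. You are in fact somewhat more careful than the paper at the pooling step --- using monotonicity to show the merged mean still induces the common expected action, and explicitly re-verifying trustworthiness of the merged binary test --- details the paper's proof leaves implicit.
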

\begin{proof}
See \Cref{pf:prop:binary_multi}. 
\end{proof}

The intuition behind this proposition is similar to \Cref{prop:binary}. In particular, if multiple distinct realizable signals induce the principal to propose the project, the principal has an incentive always to report the signal that induces the highest level of agent engagement. 

\subsection{Multiple Agents}
\label{sub:multi-agent}

In many situations, a collective action requires the unanimous consent of more than two parties; therefore, the party with learning and communication privilege needs to convince multiple other parties simultaneously. We thus consider an alternative setting with multiple agents indexed by $i\in \mathcal{I}:=\{1, 2, \ldots, n\}$. The project is launched only when the principal proposes it and all the agents accept it. If the project is launched, the principal's payoff is still $u(\state)$, while each agent $i$'s payoff becomes $v(\state, \type_i)$ with $\type_i$ following a distribution $\typedist_i(\cdot)$. Since the principal's limited commitment is still in place in this extension, \Cref{prop:binary} continues to work. Hence, the principal's problem is to find the binary test that maximizes 
\begin{align}
\max_{\states_1\subseteq \states} \quad & \int_{\states_1} u(\state) \dd\statedist(\state) \cdot \prob{\int_{\states_1} v(\state,\type_i) \dd\statedist(\state) \geq 0, \forall i \in \mathcal{I}}  \nonumber\\
\text{s.t.} \quad & \int_{\states_0} u(\state) \dd\statedist(\state) \leq 0.  \nonumber
\end{align}

As the agent's utility is monotone in $\type$, a lower-type agent's acceptance implies a higher-type's. Thus, the principal's problem is to convince the \textit{pivotal agent}, the one with the lowest realized type. Because $\type_i$ is random, the pivotal agent's type $\type_{\text{pivotal}}:=\min_{i \in \mathcal{I}} \type_i$ follows a distribution $\typedist_{\text{pivotal}}(\cdot)$, which is determined by the distributions $\{\typedist_i(\cdot)\}_{i \in \mathcal{I}}$. So, we transform the problem into one where the principal faces a \textit{single} agent whose type distribution is $\typedist_{\text{pivotal}}(\cdot)$, and all the analysis in the baseline model carries over to this extension.

\section{Concluding Remarks}

This paper studies how a leading party can acquire and communicate information to achieve unanimous consent with other parties. We focus on situations where the leading party faces limited commitment: she can commit ex-ante to an information structure but lacks ex-post commitment to truth-telling. 

The paper offers two main insights. Methodologically, we discover the possibility for limited commitment to render some otherwise intractable information design problems tractable, facilitating the analysis of settings that are not well understood under full commitment. Substantively, we provide a plausible explanation for the prevalence of simple binary tests in practice, even when the agent's type space and action space are highly complex. 

Several directions are promising for further exploration. For example, it would be intriguing to extend the analysis to alternative decision rules, such as majority rule.\footnote{Some existing papers study the effects of dispersed information on collective decision-making under majority rule, but take the information environment as exogenous. See, e.g., \cite{ali2025political}.} Another valuable extension is to introduce multiple principals who may compete in persuasion, a setting that is also analytically challenging under full commitment.

\newpage

\appendix

\section{Proofs}
\label{apx:proofs}

\subsection{Proof of \Cref{prop:binary}}
\label{pf:prop:binary}

We prove this by contradiction. Suppose the principal adopts a test $t = (\signals, \test)$ such that there are two signals $\signal',\signal'' \in \signals$ under which she proposes the project. Under the assumption that $v(\state, \type)$ increases in $\type$ for any $\state$, the approval set is a cutoff set.  Suppose the signals $\signal'$ and $\signal''$ induce the agent to accept the project if and only if his preference type $\type$ is weakly above $\type'$ and $\type''$, respectively. If $\type' = \type''$, we can combine $\signal'$ and $\signal''$ into a single signal without affecting the equilibrium outcome. If $\type' < \type''$ (i.e., $\signal'$ induces more agent types to accept the proposal), the principal will benefit from misreporting $\signal''$ as $\signal'$, rendering the test untrustworthy. Similarly, $\type' > \type''$ cannot hold, either. 


Finally, if there are multiple realizable signals under which the principal forgoes the project, we can combine them into one signal without loss of generality. 

\subsection{Proof of \Cref{prop:exist}}
\label{pf:prop:exist}


\Cref{prop:binary} allows us to focus on binary tests. Let $\omega(\state)\in[0,1]$ be the probability of the proposal signal if the state is $\state$. We can rewrite Problem~(\nameref{problem:P}) as 

\begin{problem-nonum}[\textbf{P$_1$}]\label{problem:P1}  
~\\ \vspace{-33pt}
\begin{align}
\max_{\omega} \quad & \int_{\states} u(\state)\omega(\state) \dd\statedist(\state) \cdot \prob{\int_{\states} v(\state,\type)\omega(\state) \dd\statedist(\state) \geq 0} \nonumber  \\ 
\text{s.t.} \quad & \int_{\states} u(\state) (1-\omega(\state)) \dd\statedist(\state) \leq 0 \quad \text{ if } \omega(\state) \not\equiv 0.  \nonumber
\end{align}
\end{problem-nonum}
Next, we show the theorem in two steps. 

\paragraph{\underline{Step 1.}} We show that it is without loss of optimality for the principal to adopt a deterministic binary test (i.e., one that does not randomize between two signals for any $\state$). We prove this argument by considering two cases.

\paragraph{Case 1.} If the proposal is rejected by all agent types in an optimal solution, then the optimal solution can be written as a deterministic binary test, $\omega^*(\state)\equiv 0, \forall \state$. 

\paragraph{Case 2.} If the proposal is accepted by some agent types in an optimal solution $\omega^*$, then we know that $\omega^*(\state)\not\equiv 0$. Let $\type^*$ denote the smallest agent type to accept under the solution $\omega^*$. Then, $\omega^*$ must also solve
\begin{problem-nonum}[\textbf{P$_2$}]\label{problem:P2}  
~\\ \vspace{-55pt}
\begin{align}
\max_{\omega} \quad & \int_{\states} u(\state)\omega(\state) \dd\statedist(\state) \cdot \prob{\type\geq \type^*} \nonumber  \\ 
\text{\rm s.t.} \quad & \int_{\states} u(\state) (1-\omega(\state)) \dd\statedist(\state) \leq 0  \nonumber \\
\quad & \int_{\states} v(\state,\type^*)\omega(\state) \dd\statedist(\state) \geq 0, \nonumber 
\end{align}
\end{problem-nonum}
Otherwise, any solution $\hat{\omega}$ to Problem~(\nameref{problem:P2}) will also outperform $\omega^*$ in Problem~(\nameref{problem:P1}), as it already outperforms $\omega^*$ even when not accounting for the potential acceptance of agent types strictly below $\type^*$. 

\begin{lemma}
\label{lemma:bangbang}
If Problem~(\nameref{problem:P2}) admits a solution, it must also admit a bang-bang solution (i.e., a solution that takes values only in $0$ and $1$).
\end{lemma}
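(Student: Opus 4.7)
The plan is to apply Lyapunov's convexity theorem, exploiting the fact that in Problem~(\nameref{problem:P2}), both the objective and the two constraints depend on $\omega$ only through two scalar integrals. Specifically, since $\type^*$ is fixed, the factor $\prob{\type \geq \type^*}$ is a constant, so the objective is a constant multiple of $I_u(\omega) := \int_\states u(\state) \omega(\state) \dd\statedist(\state)$. The first constraint can be rewritten as $I_u(\omega) \geq \int_\states u(\state) \dd\statedist(\state)$, also depending on $\omega$ only through $I_u(\omega)$. The second constraint depends on $\omega$ only through $I_v(\omega) := \int_\states v(\state,\type^*) \omega(\state) \dd\statedist(\state)$. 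Hence, if I can replace any feasible $\omega^*$ with a $\{0,1\}$-valued $\omega'$ satisfying $I_u(\omega') = I_u(\omega^*)$ and $I_v(\omega') = I_v(\omega^*)$, then $\omega'$ is bang-bang, feasible, and equally optimal.

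To produce such an $\omega'$, I would consider the vector measure $\nu$ on $\states$ defined by $\nu(A) := \bigl(\int_A u(\state) \dd\statedist(\state),\, \int_A v(\state,\type^*) \dd\statedist(\state)\bigr)$. Because $\statedist$ is atomless (by assumption in \Cref{sec:model}) and $u, v(\cdot,\type^*)$ are bounded measurable functions, $\nu$ is a finite atomless $\mathbb{R}^2$-valued measure. Lyapunov's convexity theorem then guarantees that the range $\{\nu(A) : A \text{ measurable}\}$ is a convex subset of $\mathbb{R}^2$. A standard extension of this result (the bang-bang principle) implies further that for any measurable $\omega^* : \states \to [0,1]$, the vector $\int \omega^* \dd\nu = (I_u(\omega^*), I_v(\omega^*))$ also lies in $\{\nu(A) : A \text{ measurable}\}$; i.e., there exists a measurable set $A^* \subseteq \states$ such that setting $\omega'(\state) = \mathbf{1}\{\state \in A^*\}$ achieves $I_u(\omega') = I_u(\omega^*)$ and $I_v(\omega') = I_v(\omega^*)$. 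This $\omega'$ is the desired bang-bang solution, and identifying $\states_1 = A^*$ yields the deterministic binary test asserted in \Cref{prop:exist}.

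The main conceptual step to get right is the extension of Lyapunov's theorem from indicator functions to arbitrary $[0,1]$-valued controls: given a feasible $\omega^*$, one must actually produce a set $A^*$ whose vector integral matches $\int \omega^* \dd\nu$. The standard way to do this is to apply Lyapunov's theorem on the product space $\states \times [0,1]$ (with the measure $\statedist \otimes \text{Lebesgue}$, which is atomless) to the vector measure with densities $u(\state) \mathbf{1}\{y \leq \omega^*(\state)\}$ and $v(\state,\type^*)\mathbf{1}\{y \leq \omega^*(\state)\}$; equivalently, one can appeal to the classical bang-bang principle in linear control theory, which states exactly that convex combinations of the $\{0,1\}$-valued integrals recover all $[0,1]$-valued integrals. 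Modulo invoking this well-known result, the argument is compact and does not require Lagrangian analysis or explicit virtual-value manipulations.
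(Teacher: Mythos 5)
Your proof is correct, but it takes a genuinely different route from the paper's. You observe that the objective and both constraints of Problem~(\nameref{problem:P2}) depend on $\omega$ only through the pair $\bigl(\int u\,\omega\dd\statedist,\ \int v(\cdot,\type^*)\,\omega\dd\statedist\bigr)$, and then invoke Lyapunov's convexity theorem (in its bang-bang form, for the atomless $\mathbb{R}^2$-valued measure with densities $u$ and $v(\cdot,\type^*)$ against $\statedist$) to replace any feasible $\omega^*$ by an indicator function with the same two integrals, hence the same objective value and the same constraint slacks. The paper instead first argues that the first constraint can be dropped (any solution of (\nameref{problem:P2}) must solve the relaxed Problem~(\nameref{problem:P3})), writes down the Lagrangian of (\nameref{problem:P3}), notes that $u(\state)+\eta^*v(\state,\type^*)=0$ on the set $Z^*$ where $\omega^*$ is interior, and then uses atomlessness to carve out a subset $Z^{**}\subseteq Z^*$ matching $\int_{Z^*}v(\cdot,\type^*)\omega^*\dd\statedist$. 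The two approaches buy different things: the paper's Lagrangian route is reused almost verbatim in the proofs of \Cref{thm:positivealign,thm:negativealign} to pin down the \emph{shape} of the optimal $\states_1$ (threshold, interval, tail), so it earns its keep beyond this lemma; your route proves the lemma itself more directly and, notably, does not require asserting the existence of a Lagrange multiplier $\eta^*$ satisfying the pointwise first-order condition on $Z^*$ (a step the paper states without justification for this infinite-dimensional program). Your argument also hands back both integrals exactly, which makes the subsequent claim in the paper's Step~1 --- that the modified test still induces acceptance by exactly the types $\type\geq\type^*$ and hence solves Problem~(\nameref{problem:P1}) --- immediate for type $\type^*$ and all higher types by monotonicity of $v$ in $\type$. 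The only point to make explicit if you write this up is the precise statement of the bang-bang extension of Lyapunov's theorem you are citing (your product-space construction with $\statedist\otimes\text{Lebesgue}$ is a perfectly adequate way to supply it).
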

\begin{proof}
In Problem~(\nameref{problem:P2}), $\type^*$ is taken as given, so the objective is equivalent to maximizing $\int_{\states} u(\state)\omega(\state) \dd\statedist(\state)$. The first constraint is equivalent to $\int_{\states} u(\state) \omega(\state) \dd\statedist(\state) \geq \expect{u(\state)}$, whose left-hand-side is identical to the objective and the right-hand-side is a constant. Hence, if a solution to the \textit{relaxed problem}, defined as Problem~(\nameref{problem:P2}) without the first constraint, violates the first constraint, then Problem~(\nameref{problem:P2}) should admit no solution, violating the presumption of this lemma. Therefore, any solution to Problem~(\nameref{problem:P2}) must also solve
\begin{problem-nonum}[\textbf{P$_3$}]\label{problem:P3}  
~\\ \vspace{-55pt}
\begin{align}
\max_{\omega} \quad & \int_{\states} u(\state)\omega(\state) \dd\statedist(\state)  \nonumber \\ 
\text{\rm s.t.} \quad & \int_{\states} v(\state,\type^*)\omega(\state) \dd\statedist(\state) \geq 0, \nonumber 
\end{align}
\end{problem-nonum}
\noindent whose Lagrangian functional is 
$$\mathcal{L} = \int_{\states} [u(\state)+\eta\cdot v(\state,\type^*)]  \omega(\state) \dd\statedist(\state).$$
If a solution to Problem~(\nameref{problem:P3}), which we denote by $\omega^*$, is not a bang-bang solution, let $Z^*:= \{\state\given\omega^*(\state)\in(0,1) \}$. We know that $u(\state)+\eta^* \cdot v(\state,\type^*) = 0$ for any $\state \in Z^*$, where $\eta^*$ is the Lagrangian multiplier in the optimal solution. Now, we want to construct an alternative $\omega^{**}$ that is a bang-bang solution to Problem~(\nameref{problem:P3}). 

If the set $Z^*$ has zero measure, then the construction is immediate, as we only need to boost the value of $\omega^*(\state)$ to 1 for the states in $Z^*$. If the set $Z^*$ has a positive measure, it suffices to find an alternative set $Z^{**} \subseteq Z^*$ such that $\int_{Z^{**}} v(\state,\type^*) \dd\statedist(\state) = \int_{Z^*} v(\state,\type^*)\omega^*(\state) \dd\statedist(\state)$. Because the distribution $\statedist$ is atomless, such a $Z^{**}$ exists. In particular, we construct $\omega^{**}$ as 
$$\omega^{**}(\state) = \begin{cases}
    \mathbf{1}(\state\in Z^{**}) & \text{ if } \state\in Z^* \\
    \omega^*(\state) &  \text{ if } \state\notin Z^*.
\end{cases}$$
It is a bang-bang solution that also solves Problem~(\nameref{problem:P3}).
\end{proof}
  
Notably, this lemma is mainly driven by the assumption that the state distribution $\statedist(\cdot)$ is atomless. This lemma implies that the constructed $\omega^{**}$ works as well as $\omega^*$ in Problem~(\nameref{problem:P2}) and, most importantly, can be represented by a deterministic binary test. From its construction, we can infer that the smallest agent type to accept under $\omega^{**}$ is also $\type^*$, and therefore, we conclude that $\omega^{**}$ also solves Problem~(\nameref{problem:P1}). 

Therefore, we conclude Step 1. The purpose of this step is to simplify the principal's problem to identifying a set of \textit{proposal states} $\states_1 \subseteq \states$, which is exactly Problem~(\nameref{problem:Pstar}) featured in the theorem and copied below.  
\vspace{-6pt}
\begin{align}
\max_{\states_1\subseteq \states} \quad & \int_{\states_1} u(\state) \dd\statedist(\state) \cdot \prob{\int_{\states_1} v(\state,\type) \dd\statedist(\state) \geq 0}  \nonumber \\
\text{s.t.} \quad & \int_{\states/\states_1} u(\state) \dd\statedist(\state) \leq 0 \quad \text{ if } \states_1\neq \emptyset,   \tag{IC-P'} \nonumber
\end{align}


\paragraph{\underline{Step 2.}} We show that a solution to Problem~(\nameref{problem:Pstar}) exists. 

Since the function $u(\state)$ is bounded and the distribution $G(\cdot)$ is atomless, we know that $\int_{\states_1} u(\state) \dd\statedist(\state)$ is continuous in $\states_1$. Since the function $v(\state, \type)$ is bounded and we assume that the agent always accepts the proposal when being indifferent, we know that $\prob{\int_{\states_1} v(\state,\type) \dd\statedist(\state) \geq 0}$ is upper semi-continuous in $\states_1$. Putting these two arguments together, the objective function is upper semi-continuous in $\states_1$. 

Meanwhile, since we require the signal generation rule $\test$ to be a measurable mapping, we can infer that in a deterministic binary test, the set of proposal states $\states_1$ must also be measurable. This allows us to confirm that the feasible set (the set of $\states_1$ satisfying \eqref{icp2}) must be compact because (a) $\int_{\states/\states_1} u(\state) \dd\statedist(\state)$ is continuous in $\states_1$ due to a bounded $u(\state)$ and an atomless $G(\cdot)$, and (b) \eqref{icp2} features a weak inequality.

Finally, by Weierstrass' extreme value theorem (semi-continuity version), a maximizer for Problem~(\nameref{problem:Pstar}) must exist because the feasible set is compact and the objective function is upper semi-continuous in $\states_1$. This proves the existence of an optimal test in the form of a deterministic binary test. 

\subsection{Proof of \Cref{coro:overturn}}
\label{pf:coro:overturn}


Notice that $\states_1^*(\mathcal{E})$ is still a feasible solution in the environment $\tilde{\mathcal{E}}$, as $\tilde{u}(\state)\leq u(\state)$ for $\state \in \states_0^*(\mathcal{E})$. Hence, in $\tilde{\mathcal{E}}$, the principal can do weakly better than adopting $\states_1^*(\mathcal{E})$, which already yields her a payoff equal to $\mathcal{U}(\mathcal{E})$. 




\subsection{Proof of \Cref{thm:positivealign}}
\label{pf:thm:positivealign}

As pointed out in the proof of \cref{lemma:bangbang}, any solution $\omega^*$ to the principal's problem must also solve Problem~(\nameref{problem:P3}) for some $\type^*$, which we copy below. 
\begin{align}
\max_{\omega} \quad & \int_{\states} \state\cdot\omega(\state) \dd\statedist(\state)  \nonumber \\
\text{s.t.} \quad & \int_{\states} v(\state,\type^*)\omega(\state) \dd\statedist(\state) \geq 0. \nonumber 
\end{align}
Its Lagrangian functional is 
$$\mathcal{L} = \int_{\states} [\state+\eta\cdot v(\state,\type^*)]  \omega(\state) \dd\statedist(\state).$$
Since the Lagrangian multiplier $\eta\geq 0$, when the players' preferences are positively aligned, the term $\state+\eta\cdot v(\state,\type^*)$ is strictly increasing in $\state$, and therefore, the maximizer of the Lagrangian functional must take the form of $\omega^*(\state) = \indicate{\state \in [\state^*,1]}$, where $\state^*$ satisfies $\state^*+\eta^* \cdot v(\state^*,\type^*) = 0$ with $\eta^*$ being the Lagrangian multiplier in the optimal solution. 

We also want to show that $\state^*\geq 0$. By contradiction, if $\state^*<0$, it follows that $v(\state^*,\type^*) > 0$, as $\state^*+\eta^* \cdot v(\state^*,\type^*) = 0$ must hold. However, this further indicates that the type $\type^*$ receives a strictly positive payoff from accepting the proposal since $v(\state, \type^*) \geq v(\state^*,\type^*) > 0$ for $\state > \state^*$. Hence, the constraint in the above optimization problem is non-binding, suggesting $\eta^* = 0$. This, in turn, indicates that $\state^*+\eta^* \cdot v(\state^*,\type^*) = \state^* < 0$, a contradiction. 
 
\subsection{Proof of \Cref{thm:negativealign}}
\label{pf:thm:negativealign}

Similar to the proof of \Cref{thm:positivealign}, a solution $\omega^*$ to Problem~(\nameref{problem:P3}) must satisfy 
\begin{equation*}
\omega^*(\state) = \begin{cases}
    1 \quad \text{  if  } \state + \eta^* \cdot v(\state,\type^*)\geq 0 \\
    0 \quad \text{  if  } \state + \eta^* \cdot v(\state,\type^*) < 0, 
    \end{cases}
\end{equation*}
for some Lagrangian multiplier $\eta^*\geq 0$ and minimal agent type of acceptance $\type^*$. 

\paragraph{\underline{Part (a).}} When $v(\state, \type)$ is concave, the term $\state + \eta^* \cdot v(\state,\type^*)$ is also concave. Hence, either (a) this term is always negative, which corresponds to $\states_1^*=\emptyset$, or (b) it is (weakly) positive if and only if $\state$ belongs to an interval $[\state^*, \state^{**}]$. For the latter case, it remains to show that $\state^* \leq 0 \leq \state^{**}$. Notice that $\state^{**} < 0$ is impossible, as it implies a strictly negative payoff for the principal to propose the project. Meanwhile, if, by contradiction, $\state^* > 0$, then it follows that $\state^* - \epsilon + \eta^* \cdot v(\state^* - \epsilon, \type^*) < 0$ for an arbitrarily small $\epsilon$, indicating that $v(\state^* - \epsilon,\type^*) < 0$. Since $v(\state, \type)$ decreases in $\state$, we can further infer that $v(\state,\type^*) < 0$ for $\state \geq \state^*$. This contradicts the type-$\type^*$ agent's incentive to accept the proposal when $\state\in[\state^*, \state^{**}]$. 

\paragraph{\underline{Part (b).}} When $v(\state, \type)$ is convex, the term $\state + \eta^* \cdot v(\state,\type^*)$ is also convex. Hence, either (a) this term is always negative, which corresponds to $\states_1^*=\emptyset$, or (b) it is (weakly) positive if and only if $\state$ belongs to some $[-1, \state^{**}] \cup [\state^{*}, 1]$.

\subsection{Proof of \Cref{thm:private_single}}
\label{pf:thm:private_single}

In \Cref{sec:private_menu}, both players' payoff functions are linear in $\state$. Hence, the relevant features of a deterministic binary test, represented by its proposal states $\states_1$, are (a) the probability of the proposal signal, which we denote by $p(\states_1)$, and (b) the expected state induced by the proposal signal, which we denote by $\posteriormean(\states_1)$.\footnote{For notational consistency, we allow $\posteriormean(\states_1)$ to be any number in $\states$ when $\states_1=\emptyset$.} In other words, two binary tests $\states_1$ and $\states_1'$ result in the same equilibrium payoffs for the players if $p(\states_1)=p(\states_1')$ and $\posteriormean(\states_1)=\posteriormean(\states_1')$. 

Accordingly, we let $\tilde{p}(\states_1):=1- p(\states_1)$ denote the probability of the null signal and $\tilde{\posteriormean}(\states_1) = \frac{\expect{\state} - p(\states_1) \cdot \posteriormean(\states_1)}{1-p(\states_1)}$ denote the expected state induced by the null signal. Since the principal's payoff is normalized to be $u(\state)=\state$ in this case, a deterministic binary test $\states_1$ is trustworthy if and only if $\tilde{\posteriormean}(\states_1) \leq 0 \leq \posteriormean(\states_1)$. This suggests that $\posteriormean(\states_1) \geq \tilde{\posteriormean}(\states_1)$, which further implies a necessary condition, $\posteriormean(\states_1) \geq \expect{\state}$, because by Bayes' rule, the posterior mean belief about $\state$ upon realization of the proposal signal should be weakly higher than the prior mean. 

Since $p(\states_1)$ and $\posteriormean(\states_1)$ capture the essential information of a deterministic binary test $\states_1$, we henceforth shift the analysis from the space of $\states_1$ to the space of $(p,\posteriormean)$. We require $\posteriormean\geq \expect{\state}$ (the necessary condition derived in the last paragraph) and say that a pair $(p,\posteriormean)$ can be induced by a deterministic binary test if there exists $\states_1$ such that $p(\states_1)=p$ and $\posteriormean(\states_1)=\posteriormean$.

\begin{lemma}\label{lem:feasible_posterior_mean}
For $\mu \in [\expect{\state}, 1]$, let $\state_\posteriormean$ be the only value that satisfies $\expect{\state\given\state\geq \state_\posteriormean} = \posteriormean$. \\
(a) The pair $(p, \posteriormean)$ can be induced by a deterministic binary test if and only if $p \leq 1-\statedist(\state_{\posteriormean})$.\\
(b) If $p = 1-\statedist(\state_{\posteriormean})$, then the pair $(p, \posteriormean)$ can only be induced by a threshold test (up to difference of a zero-measured set). \\
(c) If $p < 1-\statedist(\state_{\posteriormean})$, then the pair $(p, \posteriormean)$ cannot be induced by a threshold test. Instead, it can be induced by an interval test. 
\end{lemma}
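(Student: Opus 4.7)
The plan is to reformulate feasibility in terms of the summary statistics $(p(\states_1), \posteriormean(\states_1))$ and invoke a standard rearrangement (``bathtub'') inequality: among all measurable $\states_1 \subseteq \states$ with $\statedist(\states_1) = p$, the upper threshold set $[\state_p, 1]$, where $\state_p$ is defined by $\statedist([\state_p, 1]) = p$, is the unique (up to null sets) maximizer of the conditional mean $\expect{\state \given \state \in \states_1}$. The first step is to establish this by a direct swapping argument: if $\states_1$ differs from $[\state_p, 1]$ on a positive-measure set, then $\states_1 \setminus [\state_p, 1]$ and $[\state_p, 1] \setminus \states_1$ have the same positive $\statedist$-mass, and swapping them strictly raises the conditional mean while preserving the mass, contradicting optimality.

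For necessity in part (a), suppose $(p, \posteriormean)$ is induced by some $\states_1$. The rearrangement inequality gives $\posteriormean \leq \expect{\state \given \state \geq \state_p}$, and because $\statedist$ is atomless with full support on $\states$, the map $t \mapsto \expect{\state \given \state \geq t}$ is strictly increasing. Combined with $\posteriormean = \expect{\state \given \state \geq \state_\posteriormean}$, this forces $\state_\posteriormean \leq \state_p$, equivalently $p \leq 1 - \statedist(\state_\posteriormean)$.

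For sufficiency and parts (b) and (c): when $p = 1 - \statedist(\state_\posteriormean)$, the threshold test $\states_1 = [\state_\posteriormean, 1]$ realizes $(p, \posteriormean)$, and the equality case of the rearrangement inequality (the uniqueness clause above) yields (b). When $p < 1 - \statedist(\state_\posteriormean)$, the plan is to construct an interval test by continuity. Consider the family $\{[a, b(a)]\}_{a \in [\state_\posteriormean, \posteriormean]}$, where $b(a) \in [a, 1]$ is defined implicitly by $\expect{\state \given \state \in [a, b(a)]} = \posteriormean$; existence follows from applying the intermediate value theorem to $b \mapsto \expect{\state \given \state \in [a, b]}$, using $\lim_{b \to a^+} \expect{\state \given \state \in [a, b]} = a \leq \posteriormean$ and $\expect{\state \given \state \in [a, 1]} \geq \posteriormean$ (strict for $a > \state_\posteriormean$). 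The induced mass $\statedist(b(a)) - \statedist(a)$ is continuous in $a$ and decreases from $1 - \statedist(\state_\posteriormean)$ at $a = \state_\posteriormean$ to $0$ at $a = \posteriormean$, so a second application of the intermediate value theorem delivers an interval test realizing any target $p \in (0, 1 - \statedist(\state_\posteriormean))$; the value $p = 0$ is trivially given by $\states_1 = \emptyset$. Finally, no threshold $[\hat\state, 1]$ can realize such a pair: mass $p < 1 - \statedist(\state_\posteriormean)$ forces $\hat\state > \state_\posteriormean$, and then its mean strictly exceeds $\posteriormean$ by strict monotonicity, giving the second half of (c).

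The hard part will be cleanly verifying the two continuity/monotonicity ingredients used above: strict monotonicity of $t \mapsto \expect{\state \given \state \geq t}$, which relies on full support of $\statedist$ (or careful handling of zero-density regions), and continuity of both the implicit map $a \mapsto b(a)$ and the associated mass function on $[\state_\posteriormean, \posteriormean]$, which rest on the atomlessness of $\statedist$ together with the implicit function theorem applied to the conditional-mean equation. Once these ingredients are in place, the overall structure is a two-step intermediate value theorem driven by the rearrangement inequality.
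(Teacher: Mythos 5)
Your proposal is correct and follows essentially the same route as the paper's proof: both rest on the fact that, for a fixed proposal probability $p$, the upper threshold set maximizes the conditional mean (the paper invokes this via a first-order-stochastic-dominance comparison with the threshold test $[G^{-1}(1-p),1]$, you prove it by a swapping argument), and both construct the interval test for $p < 1-\statedist(\state_\posteriormean)$ by trimming mass from the two tails of $[\state_\posteriormean,1]$ while holding the mean at $\posteriormean$. Your version merely makes explicit (via the two intermediate-value-theorem steps and the equality case of the rearrangement inequality) details that the paper leaves as a verbal sketch.
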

\begin{proof}
To prove this lemma, we establish the following arguments. First, if $p = 1-\statedist(\state_{\posteriormean})$, by the definition of $\state_\posteriormean$, the threshold test $\states_1=[\state_\posteriormean, 1]$ induces the pair; from the construction, we can also infer that the pair induced by a threshold test must satisfy $p = 1-\statedist(\state_{\posteriormean})$. 

Second, if $p < 1-\statedist(\state_{\posteriormean})$, we can construct an interval test to induce the pair. Based on the threshold test $\states_1=[\state_\posteriormean, 1]$, we simultaneously remove some proposal states from both tails of the interval $[\state_\posteriormean, 1]$ so that the probability of proposal is reduced to $p$ while the expected state remains unchanged at $\posteriormean$. This construction gives us an interval test that induces $(p, \posteriormean)$. 

Third, if $p > 1-\statedist(\state_{\posteriormean})$, we show that the pair cannot be induced by any deterministic binary test. Specifically, for any deterministic binary test $\states_1$ with $p(\states_1) = p$, the maximal expected state is achieved by the threshold test $[G^{-1}(1-p), 1]$. For this test, the expected state satisfies 
$$\expect{\state\given \state\geq G^{-1}(1-p)} < \expect{\state\given \state\geq \state_{\posteriormean}} = \posteriormean,$$
where the inequality follows from $p > 1-\statedist(\state_{\posteriormean})$. Hence, no deterministic binary tests with proposal probability $p$ can reach an expected state of $\posteriormean$ if $p > 1-\statedist(\state_{\posteriormean})$. 
\end{proof}

We are now ready to show the proposition. 

\paragraph{\underline{Statement (a).}} Suppose, by contradiction, there exists an optimal test $\states_1^* \neq \emptyset$. A type-$\type$ agent's mean posterior belief about $\state$ upon seeing the proposal signal must satisfy 
$$\posteriormean(\states_1^*) \geq \expect{\state} > \type,$$
where the first inequality exploits the necessary condition identified earlier in the proof, and the second inequality follows from the premise of statement (a). Hence, the agent should not accept any proposal based on his posterior belief. 

\paragraph{\underline{Statement (b).}} Instead of directly applying Theorems~\ref{thm:positivealign} and \ref{thm:negativealign}, we provide a more intuitive proof for statement (b). We want to show that, given a trustworthy deterministic binary test that does not admit a threshold form, there is a trustworthy threshold test that strictly improves the principal's expected payoff. 

Suppose the principal uses a binary non-threshold test $\states_1^*$. The fact that it does not admit a threshold form implies that $p(\states_1^*) < 1-\statedist(\state_{\posteriormean(\states_1^*)})$ according to \Cref{lem:feasible_posterior_mean}. Moreover, for it to be trustworthy, $\tilde{\posteriormean}(\states_1^*) \leq 0 \leq \posteriormean(\states_1^*)$ must hold. 

Now, we construct an alternative threshold test $\states_1^{'}:=[\state_{\posteriormean(\states_1^*)}, 1]$. Under such construction, we have $\posteriormean(\states_1^{'})=\posteriormean(\states_1^*)$, and according to \Cref{lem:feasible_posterior_mean}, $p(\states_1^{'}) = 1-\statedist(\state_{\posteriormean(\states_1^*)}) > p(\states_1^{*})$. Additionally, we have $\tilde{p}(\states_1^{'}) < \tilde{p}(\states_1^{*})$ and $\tilde{\posteriormean}(\states_1^{'}) = \frac{\expect{\state} - p(\states_1^{'}) \posteriormean(\states_1^{'})}{\tilde{p}(\states_1^{'})} < \frac{\expect{\state} - p(\states_1^{*}) \posteriormean(\states_1^{*})}{\tilde{p}(\states_1^{*})} = \tilde{\posteriormean}(\states_1^{*})$. 

Hence, this constructed threshold test (i) leads to the same decision on the part of the agent because $\posteriormean(\states_1^{'})=\posteriormean(\states_1^*)$, (ii) remains trustworthy because $\tilde{\posteriormean}(\states_1^{'}) < \tilde{\posteriormean}(\states_1^{*}) \leq 0 \leq \posteriormean(\states_1^{*}) = \posteriormean(\states_1^{'})$, and (iii) strictly increases the principal's expected payoff because its proposal probability is strictly larger than the original test while the expected state induced by the proposal signal remains unchanged.

\subsection{Proof of \Cref{thm:private}}
\label{pf:thm:private}

Notice that allowing the principal to offer a menu of tests does not alter her limited commitment, and consequently, the reasoning in \Cref{sec:general} continues to hold --- we can restrict attention to menus containing only deterministic binary tests. Hence, \Cref{thm:private}(a) holds for the same reason as in \Cref{thm:private_single}. No matter what test is adopted, the principal's proposal always intensifies the agent's pessimism about the state, so the project can never be launched if the agent is always ex-ante pessimistic. The rest of this appendix is dedicated to proving \Cref{thm:private}(b). 

Following the notation of \Cref{pf:thm:private_single}, we derive the optimal menu of tests in the space of $(p, \posteriormean)$ instead of $\states_1$. By the revelation principle, we can restrict attention to test menus that induce the agent to report his type truthfully. Hence, we can formulate the principal's problem as choosing $\{p(\type), \posteriormean(\type)\}_{\type\in\types}$, where $(p(\type), \posteriormean(\type))$ is induced by the deterministic binary test designed for a type-$\type$ agent. 

Several constraints need to be satisfied. First, the pair $(p(\type), \posteriormean(\type))$ must be \textit{feasible} in the sense that it can be induced by a deterministic binary test, i.e., $p(\type) \leq 1-\statedist(\state_{\posteriormean(\type)})$ for any $\type$ according to \Cref{lem:feasible_posterior_mean}. We will refer to this constraint as \eqref{fea}. 

Second, the test that induces $(p(\type), \posteriormean(\type))$ must be \textit{trustworthy}. This is satisfied if and only if $\tilde{\posteriormean}(\type) \leq 0 \leq \posteriormean(\type)$, which is equivalent to $\posteriormean(\type) \geq \max\left\{0, \frac{\expect{\state}}{p(\type)}\right\}$. We will refer to this constraint as \eqref{obm}, as it is adapted from the constraint \eqref{icp} specified in Problem~(\nameref{problem:P}). 

Third, the agent must be willing to report his true type. Let a type-$\type$ agent's expected payoff from fully mimicking the behavior of a type-$\type'$ agent (i.e., from choosing the test $(p(\type'), \posteriormean(\type'))$ and deciding whether to accept the proposal in the same way as a type-$\type'$ agent) by $\intutil(\type,\type') = [\type - \posteriormean(\type')]  p(\type')$. This constraint requires $\intutil(\type,\type) \geq \intutil(\type,\type')$ for any $\type\neq\type'$, which we refer to as \eqref{ica}.\footnote{\label{fn:1}As written, \eqref{ica} captures only one deviating strategy of the agent, which is to fully mimic the behavior of another type. Actually, since the agent's decision is binary, the conditions \eqref{ica} and \eqref{obw} combined completely account for all of his deviation strategies. In particular, the only deviation not captured by \eqref{ica} and \eqref{obw} is the double deviation in which he misreports his type and rejects the proposal. However, if this deviation is profitable, then it must also be profitable for him to reject the proposal after truthful reporting, which violates \eqref{obw}.} 

Finally, we also need the constraint \eqref{obw}, as shown below, to ensure that the agent is willing to accept the proposal if proposed. With all the constraints laid out, we can formulate the principal's problem of finding the optimal test menu as follows. 
\begin{problem-nonum}[\textbf{S}] \label{problem:S}
~\\ \vspace{-33pt}
\begin{align}
\max_{\{p(\type),\posteriormean(\type)\}_{\type}} \quad & \expect[\type]{p(\type)\cdot\posteriormean(\type)}  \nonumber\\
\text{s.t.} \quad &\intutil(\type,\type) \geq \intutil(\type,\type'), \ \forall \type'\neq\type,  \tag{IC-A}\label{ica}\\
& \posteriormean(\type) \geq \max\left\{0, \frac{\expect{\state}}{p(\type)}\right\} \quad \text{ if } p(\type) >0, \tag{IC-P''} \label{obm} \\
& \posteriormean(\type) \leq \type \quad \text{ if } p(\type)>0, \tag{IR-A} \label{obw} \\
& p(\type) \leq 1-\statedist(\state_{\posteriormean(\type)}).  \tag{FSB} \label{fea}
\end{align}
\end{problem-nonum}

Problem~(\nameref{problem:S}) can be simplified considerably. If $p(\type)\not\equiv 0$, let $\underline{\type}:= \inf\{\type\given p(\type)>0\}$ and $\bar{\type}:= \sup\{\type\given p(\type)>0\}$ denote the lowest and highest agent types such that the project is launched with positive probability. First, \eqref{obw} is implied by \eqref{ica} because the agent's interim utility must be non-negative. Second, by the standard envelope theorem argument,\footnote{To see how the envelope theorem directly applies, we can regard the agent in our setting as one who privately knows his value of a good $\type$ and can buy the good at the price $\posteriormean(\type')$ with probability $p(\type')$ if reporting his valuation to be $\type'$. See, e.g., \cite{milgrom2002envelope}.} \eqref{ica} is equivalent to $p(\type)$ being weakly increasing in $\type$ and 
\begin{align}
p(\type) \cdot \posteriormean(\type) = p(\type)\cdot \type - \int_{\underline{\type}}^{\type} p(z) \dd z - \mathcal{V}(\underline{\type}), \label{eq:evp}
\end{align}
where $\mathcal{V}(\type):=V(\type, \type)$ is the type-$\type$ agent's expected payoff. Third, the envelope theorem also indicates that $\mathcal{V}(\type)$ is weakly increasing in $\type$, so in the optimal solution, $\mathcal{V}(\underline{\type}) = 0$ must hold, corresponding to $\posteriormean(\underline{\type}) = \underline{\type}$. Fourth, \eqref{eq:evp} implies that $p(\type) \cdot \posteriormean(\type)$ weakly increases in $\type$, and consequently, when combining with the third argument, we can infer that \eqref{obm} is equivalent to $p(\underline{\type})\cdot \underline{\type} \geq \max\{\expect{\state},0\}$. Since we assume $\lambda > 0$, this inequality can be further simplified to $p(\underline{\type}) \geq \frac{\expect{\state}}{\underline{\type}}$. Fifth, since the left-hand side of \eqref{fea} weakly increases in $\type$, and the right-hand side weakly decreases in $\type$, it suffices to consider \eqref{fea} at $\type = \bar{\type}$. Given the value of $p(\bar{\type})$, let $\bar{\posteriormean}$ be the posterior mean such that $p(\bar{\type}) = 1-\statedist(\state_{\bar{\posteriormean}})$. This allows us to rewrite \eqref{fea} as $\posteriormean(\bar{\type}) \leq \bar{\posteriormean}$. Using \eqref{eq:evp} to express $\posteriormean(\bar{\type})$, \eqref{fea} is equivalent to $\bar{\type} - \frac{1}{p(\bar{\type})} \cdot\rbr{\int_{\underline{\type}}^{\bar{\type}} p(z) \dd z} \leq \bar{\posteriormean}$. With these five arguments, the principal's optimal test menu is either $p(\type)\equiv 0$ or the solution to the following problem. 
\begin{problem-nonum}[\textbf{S}'] \label{problem:Sprime}
~\\ \vspace{-33pt}
\begin{align*}
\max_{\{p(\type)\}_{\type}} \quad & \expect[\type]{p(\type)\cdot \type - \int_{\underline{\type}}^{\type} p(z) \dd z} \\
\text{s.t.} \quad & p(\type) \text{ is weakly increasing},\\
& p(\underline{\type}) \geq \frac{\expect{\state}}{\underline{\type}},\\
& \bar{\type} - \frac{1}{p(\bar{\type})} \cdot\rbr{\int_{\underline{\type}}^{\bar{\type}} p(z) \dd z} \leq \bar{\posteriormean}. 
\end{align*}
\end{problem-nonum}

Problem~(\nameref{problem:Sprime}) is a linear optimization problem (when fixing the choice of $p(\bar{\type})$), so it is without loss of optimality to consider the extreme points of the feasible set. In particular, the constraints include a monotonicity constraint, a boundedness constraint, and an integration constraint. Using Proposition 2.1 of \citet{winkler1988extreme}, we know that an extreme point includes up to three distinct non-zero values for $p(\type)$, in addition to the value $0$ that appears when $\type<\underline{\type}$. Moreover, in the optimal solution, \eqref{fea} must bind for $\type=\bar{\type}$; otherwise, the principal can increase $p(\bar{\type})$ and $p(\bar{\type})\cdot\posteriormean(\bar{\type})$ simultaneously to increase her expected payoff without violating any constraint. According to \Cref{lem:feasible_posterior_mean}, this suggests that the test offered to $\bar{\type}$ can be implemented by a threshold test.  

We can now summarize our findings. The corresponding $p(\type)$ of an optimal solution to Problem~(\nameref{problem:S}) is either (a) $p(\type)\equiv 0$ or (b) a step function with at most three non-zero values that has binding \eqref{fea} for $\bar{\type}$. The former case means that the project is never launched, corresponding to \Cref{thm:private}(a). The latter case can be implemented by a threshold test together with up to two interval tests, corresponding to \Cref{thm:private}(b).

\subsection{Proof of \cref{prop:binary_multi}}
\label{pf:prop:binary_multi}

Suppose two distinct realizable signals, say $s'$ and $s''$, induce the principal to propose the project. Let their induced posterior belief be $\posterior'$ and $\posterior''$, and corresponding expected state be $\mu'$ and $\mu''$. Since the principal is willing to propose the project under both signals, we have $\expect[\theta\sim \posterior']{u(\theta)} \geq 0$ and $\expect[\theta\sim \posterior'']{u(\theta)} \geq 0$, and therefore, under both signals, the principal benefits from inducing a larger expected action of the agent. 

Without loss of generality, let $\mu' \geq \mu''$. It follows that $\expect[\type]{a^*(\mu',\type)} \geq \expect[\type]{a^*(\mu'',\type)}$ due to the fact that $a^*(\mu, \type)$ weakly increases in $\mu$ for any $\type$. If the above inequality is strict, then the principal strictly prefers to report $s'$ instead of $s''$ when the true signal is $s''$, rendering the test untrustworthy; otherwise, we can combine the two signals into one without affecting the outcome of the game. 

Finally, if multiple realizable signals induce the principal to forgo the project, we can combine them into one signal without affecting the outcome of the game.

\newpage
\section{Miscellaneous Analysis}

\subsection{An Example of Optimal Menu with Three Tests}
\label{apx:three_test_menu}

Consider the following example. The state $\state$ is drawn from a uniform distribution on $[-\frac{1}{3},\frac{2}{3}]$. The agent has three possible types, $\type_1 = \frac{7}{24}, \type_2 = \frac{1}{2}$, and $\type_3 = \frac{2}{3}$. The prior probabilities of these types are $q_1=(\frac{95}{243}-\delta)\cdot\epsilon, q_2=(\frac{148}{243}+\delta)\cdot\epsilon$, and $q_3=1-\epsilon$, where $\delta>0$ and $\epsilon>0$ are both sufficiently small constants. 

In this example, the optimal menu consists of three tests: $t_1$ is an interval test represented by $(p_1 = \frac{4}{7}, \posteriormean_1 = \frac{7}{24})$; $t_2$ is an interval test represented by $(p_2 = \frac{13}{21}, \posteriormean_2 = \frac{4}{13})$; $t_3$ is a threshold test with threshold $0$, represented by $(p_3 = \frac{2}{3}, \posteriormean_3 = \frac{1}{3})$. We can verify that this test menu satisfies all the constraints of Problem~(\nameref{problem:S}). Moreover, it is optimal when $\delta$ and $\epsilon$ are sufficiently small. 

The intuition of why we need three tests in the optimal menu is as follows. First, when $\epsilon$ is sufficiently small, the type distribution is concentrated around $\type_3$. Hence, the optimal menu should mainly target $\type_3$-type and offer $t_3$ as if it is the only type. 

Second, for the two lower types, the idea is to design tests for them without violating the incentive constraint of the type $\type_3$. The ratio of the probabilities $\frac{q_1}{q_2}$ is set such that it is slightly better to offer a test with a higher proposal probability for $\type_2$ and a lower proposal probability for $\type_1$. Ideally, subject to the agent's incentive constraint, the menu would offer $(\hat{p}_1=\frac{8}{15},\hat{\posteriormean}_1=\frac{7}{24})$ and $(\hat{p}_2=\frac{13}{21},\hat{\posteriormean}_2=\frac{32}{75})$ to the agent. However, this construction violates the principal's incentive. In particular, the principal benefits from misreporting under the test $(\hat{p}_1=\frac{8}{15},\hat{\posteriormean}_1=\frac{7}{24})$, because the posterior mean induced by the null signal is $\frac{1}{42}>0$.

To avoid misreporting, the principal has two options --- she can either offer the three-test menu we mentioned above or remove the test option for the type $\type_1$ (so that the type $\type_1$ does not receive any allocation). When $\delta$ is sufficiently small, the former option provides a higher expected payoff to the principal and is thus optimal.

\end{spacing}

\begin{spacing}{1.2}
\newpage
\bibliographystyle{apalike}
\bibliography{ref}

\end{spacing}

\end{document}